\newtheorem{theorem}{Theorem}
\newtheorem{corollary}{Corollary}%
\newtheorem{claim}{Claim}
\newtheorem{definition}{Definition}
\renewenvironment{proof}{\paragraph*{Proof:}}{\hfill$\square$}
\begin{document}

\preprint{APS/123-QED}

\title{Expressivity Limits in Quantum Walk-based Optimization}

\author{Guilherme Adamatti Bridi}
\email{gabridi@cos.ufrj.br}
\affiliation{%
 Federal University of Rio de Janeiro, Brazil
}

\author{Debbie Lim}
\email{limhueychih@gmail.com}
\affiliation{%
 Centre for Quantum Computer Science, Faculty of Sciences and Technology, University of Latvia, Latvia
}

\author{Lirandë Pira}%
\email{lpira@nus.edu.sg}
\affiliation{%
 Centre for Quantum Technologies, National University of Singapore, Singapore
}

\author{Raqueline Azevedo Medeiros Santos}
\email{rsantos@lu.lv}
\affiliation{%
 Centre for Quantum Computer Science, Faculty of Sciences and Technology, University of Latvia, Latvia
}

\author{Franklin de Lima Marquezino}%
\email{franklin@cos.ufrj.br}
\affiliation{%
 Federal University of Rio de Janeiro, Brazil
}

\author{Soumik Adhikary}%
\email{soumik@nus.edu.sg}
\affiliation{%
 Centre for Quantum Technologies, National University of Singapore, Singapore
}

\date{\today}

\begin{abstract}
Quantum algorithms have emerged as a promising tool to solve combinatorial optimization problems. The quantum walk optimization algorithm (QWOA) is one such variational approach that has recently gained attention. In the broader context of variational quantum algorithms (VQAs), understanding the expressivity of the ansatz has proven critical for evaluating their performance. A key method to study this aspect involves analyzing the dimension of the dynamic Lie algebra (DLA). In this work, we derive novel upper bounds on the DLA dimension for QWOA applied to arbitrary optimization problems. Specifically, we show that the DLA dimension scales at most quadratically with the number of distinct eigenvalues of the problem Hamiltonian. As a consequence, our bound guarantees a polynomial DLA dimension with respect to the input size for optimization problems in the class $\mathsf{NPO}\text{-}\mathsf{PB}$. This result, coupled with recently established performance bounds for QWOA, allows us to identify complexity-theoretic conditions under which QWOA must be overparameterized to obtain optimal or approximate solutions for $\mathsf{NPO}\text{-}\mathsf{PB}$ problems.
\end{abstract}

\maketitle


\section{Introduction}
Combinatorial optimization problems play an important role in modeling and solving real-world systems. Some common classical approaches to solving these problems include Lagrange multipliers, simulated annealing, branch-and-bound, and evolutionary algorithms~\cite{comb_opt_1,comb_opt_2}. Recent advances in quantum computing have also led to the development of quantum algorithms designed to solve these problems~\cite{abbas2024challenges, perez2024variational, gemeinhardt2023quantum}. Among them, variational quantum algorithms (VQAs)~\cite{cerezo2021variational} --- which combine classical optimization with parametrized quantum circuits (PQCs), i.e., quantum circuits composed of unitary operations controlled by tunable parameters --- have emerged as a prominent class and are particularly well suited for noisy quantum devices. Arguably, the most prominent VQA for combinatorial optimization is the quantum approximate optimization algorithm (QAOA)~\cite{farhi2014quantum}, a gate-based heuristic inspired by the quantum adiabatic algorithm~\cite{qaa1,qaa2} that alternates between problem-specific and mixing unitaries to explore the solution space. 

Several variants of QAOA have been proposed in the literature. Many of these approaches modify the mixing unitary in order to incorporate structural information or to better respect problem constraints. Examples of structure-aware designs include ADAPT-QAOA~\cite{zhu2022adaptive} which builds the mixer iteratively based on problem-informed operator selection and adaptive-bias QAOA (ab-QAOA)~\cite{yu2022quantum} that introduces local bias fields within the standard mixer. On the other hand, in the front of constraint-aware designs, the quantum walk optimization algorithm (QWOA)~\cite{marsh2019quantum,marsh2020combinatorial} is a special variant of QAOA where the standard QAOA mixing unitary is replaced with a continuous-time quantum walk (CTQW)~\cite{farhi1998quantum} operator. QWOA has been proposed as a natural alternative for solving combinatorial optimization problems with structured constraints, where quantum walks can be efficiently implemented through an indexing procedure that maps each feasible solution to a unique computational basis state.

One of the key aspects for the success of a VQA is the expressivity of the PQC (a.k.a. ansatz), which should be sufficient to prepare the state that encodes the solution to the considered problem. Several approaches have been proposed to study the expressivity of PQCs, including techniques based on covering numbers~\cite{sain1996nature, du2022efficient} and $t$-designs~\cite{harrow2009random}. Yet another mathematical tool for measuring expressivity, which we would focus on in this work, is the dynamic Lie algebra (DLA)~\cite{larocca2022diagnosing, larocca2023theory, meyer2023exploiting, ragone2024lie, allcock2024dynamical}. Given the set of Hamiltonians that generate the unitaries in a PQC, the corresponding DLA is constructed by taking repeated nested commutators of the Hamiltonians. This forms a Lie algebra that characterizes the space of unitaries accessible through time evolution. As noted by Larocca \emph{et al.}~\cite{larocca2023theory}, a PQC is considered overparametrized when the number of parameters exceeds the DLA dimension. In this regime, one can explore all independent directions in the state space, as allowed by the Lie algebra, by tuning the parameters of the ansatz.

Beyond expressivity, the dimension of the DLA also influences other key aspects of VQAs, such as trainability and the classical simulability. In the vein of classical simulability, Goh \emph{et al.}~\cite{goh2023lie} showed that, given some additional conditions, if the DLA dimension is polynomial in the input size, one can efficiently compute expectation values of observables belonging to this algebra; this is the case for both QAOA and QWOA. Regarding trainability, the dimension of the DLA is closely related to the emergence of barren plateaus~\cite{mcclean2018barren} --- regions in the parameter space where the variance of the loss function and its gradients vanishes exponentially with system size. In such regions, the optimization landscape becomes effectively flat, naturally affecting gradient-based optimization methods and, notably, also limiting the effectiveness of gradient-free approaches~\cite{arrasmith2021effect}, thus challenging the training of PQCs. In particular, Ragone \emph{et al.}~\cite{ragone2024lie} established that the dimension of the DLA is inversely proportional to the variance of the loss function. This provides a rigorous framework for diagnosing the presence of barren plateaus based on algebraic properties of the circuit. 

In the context of QAOA, DLA has been studied both numerically and analytically~\cite{larocca2022diagnosing, larocca2023theory, meyer2023exploiting, allcock2024dynamical} for the Max-Cut problem, a combinatorial optimization problem defined on graphs and widely studied in the QAOA literature~\cite{blekos2024review}. On the numerical side, the DLA dimension has been evaluated for two specific classes: $2$-regular graphs (also known as cycle graphs or ring of disagrees) and chain graphs (or path graphs), achieving scaling of $\mathcal{O}(n)$ and $\mathcal{O}(n^2)$, respectively, where $n$ is the number of vertices of the graph~\cite{larocca2022diagnosing, meyer2023exploiting}. On the analytical side, Allcock \emph{et al.}~\cite{allcock2024dynamical}, using tools from group theory, proved that the exact dimension of the DLA for $2$-regular graphs is indeed linear in $n$, and also established that the exact DLA dimension for the complete graph is in the order of $n^3$. 

Together, these numerical and analytical findings have implications in the context of expressivity of the QAOA ansatz for the Max-Cut problem. For instance, numerical findings by Larocca \emph{et al.}~\cite{larocca2023theory} showed that the number of layers required to solve Max-Cut with a high success probability, on $2$-regular graphs and chains, matches the scaling of their corresponding DLA dimensions \footnote{Note that the number of variational parameters of QAOA/QWOA is linearly related to its number of layers. So, in asymptotic discussions, we can use the number of layers of QAOA/QWOA as a metric for parameterization without loss of generality.}. Moreover, in the case of Max-Cut on $2$-regular graphs, this correspondence was later established analytically in Ref.~\cite{rabinovich2025role}. This allows us to infer that, for these classes of graphs, QAOA does not require overparameterization to achieve optimal solutions.

While DLA has been extensively studied in the context of QAOA, it remains underexplored in the setting of QWOA. In this work, we initiate a systematic study of DLAs within the QWOA framework, focusing on whether overparameterization is necessary for solving or approximating combinatorial optimization problems.

\paragraph{\bf Main Contributions.} 
In this work, we establish a novel bound on the dimension of DLA for QWOA with arbitrary combinatorial optimization problems, specifically showing that it scales at most quadratically with the number of distinct eigenvalues of the problem Hamiltonian (Theorem~\ref{thm:DLA_QWOA}). Our bound is intuitive given that QWOA is fundamentally independent of the underlying structure of optimization problems, meaning that the unitary dynamics of QWOA depends only on the probability distribution associated with the problem Hamiltonian spectrum~\cite{headley2023problem}, whose size equals the number of distinct eigenvalues. Furthermore, a notable feature of our approach lies in its simplicity, as it avoids traditional DLA analyses based on Pauli decompositions or advanced group-theoretic tools by relying instead on elementary spectral arguments enabled by the structural properties of QWOA. 

Our general bound directly implies that, for $\mathsf{NP}$ optimization problems with polynomially bounded cost functions ($\mathsf{NPO}\text{-}\mathsf{PB}$ problems) --- the class of problems for which QWOA was designed~\cite{marsh2019quantum,marsh2020combinatorial,bennett2021quantum1} --- the dimension of the associated DLA is polynomially bounded in the input size (Corollary~\ref{cor:NPO-BP-DLA}). This result, when combined with new insights (formalized in Theorem~\ref{thm:QWOA_nonpoly}) into the analytical results of Refs.~\cite{bridi2024analytical, xie2025performance} concerning the intrinsic limitation of the performance of QWOA, being bounded by a quadratic speed-up over the random sampling procedure --- representing a Grover-style speed-up~\cite{grover1996fast} --- leads to complexity-theoretic conditions for QWOA to be overparameterized (Theorem~\ref{thm:NPOPB}). Specifically, we argue that for any optimization problem belonging to the complexity class $\mathsf{NPO}\text{-}\mathsf{PB}$ but not to the complexity class $\mathsf{BPPO}$ (see Definition~\ref{def:BPPO}), the QWOA ansatz must be overparameterized in order to find optimal solutions. Similarly, if the problem does not belong to the class $\mathsf{BP\text{-}APX}$ (see Definition~\ref{def:BPAPX}), then QWOA also requires overparameterization to achieve any fixed approximation ratio.

\paragraph{\bf Outline.} 
This paper is organized as follows. In Section~\ref{sec:preliminaries_and_methods}, we introduce some background on combinatorial optimization, complexity classes, QWOA, and DLA. We present our main results and their theoretical implications in Section~\ref{sec:main_results}. Section~\ref{sec:examples} illustrates these results through the well-known problems of unstructured search, Max-Cut, and $k$-Densest Subgraph. Given the generality of the theoretical results presented, the examples included are not intended to serve as benchmarks, but rather to illustrate broader underlying phenomena. Lastly, we conclude our findings and discuss the future directions of our work in Section~\ref{sec:conclusion}.

\section{Preliminaries and Methods}\label{sec:preliminaries_and_methods}

\subsection{Combinatorial Optimization Problems}

Combinatorial optimization focuses on identifying the optimal solution from a discrete, finite set of candidates. Such problems are prevalent in both theoretical and applied settings, including graph problems, logistics, scheduling, and network design~\cite{comb_opt_1, comb_opt_2, braekers2016vehicle}. Their central difficulty lies in the exponential growth of the solution space, which makes exhaustive search impractical even for moderately sized instances~\cite{crescenzi1995compendium}. Quantum algorithms based on Grover's search~\cite{grover1996fast}, such as Grover Adaptive Search~\cite{gas1, gas2, gas3}, are also affected by this limitation, since their quadratic speed-up is insufficient to overcome the exponential size of the solution space in general combinatorial problems. In response to these challenges, a rich literature has developed over the past decades exploring algorithmic strategies for combinatorial optimization problems, ranging from exact
algorithms~\cite{laporte1987exact, baker1983exact, gupta2018faster}, to approximate and heuristics methods~\cite{festa2002randomized, sanchez2020systematic, mazyavkina2021reinforcement}, as well as quantum algorithms~\cite{abbas2024challenges, perez2024variational, gemeinhardt2023quantum}.

We consider combinatorial optimization problems defined by a cost function (the objective function) $C : \mathcal{S} \rightarrow \mathbb{R}$, where $\mathcal{S}$ --- the solution space --- is the set of all possible solutions. The feasible solution space $\mathcal{S'} \subseteq \mathcal{S}$ is the set of all feasible solutions, i.e., those satisfying problem constraints. The goal is to find the feasible solution $z^* \in \mathcal{S'}$ that minimizes the cost function, i.e., $z^* = \arg\min_{z \in \mathcal{S'}} C(z)$. A problem is said to be unconstrained if $\mathcal{S'} = \mathcal{S}$, and constrained otherwise. We denote the input size of a given problem instance as $s$ \footnote{The input size is commonly denoted by $n$ in the literature, often coinciding with the number of vertices in a graph and/or the number of qubits. However, we adopt a different symbol in this work, as we deal with constrained problems where such a correspondence does not apply.}. Let $\mathcal{C}$ be the image of $C$ when restricted to the subdomain $\mathcal S'$. We denote the set  $\mathcal{C} = \{ x_1, x_2, \dots, x_m \}$, where each $x_j$ represents a distinct (feasible) cost value and $m = |\mathcal{C}|$. For $j = 1, \ldots,m$, the respective multiplicity of each cost $x_j$ is denoted by $k_j$ (i.e., $k_j = |\{ s \in \mathcal{S}' \mid C(s) = x_j \}|$).

\subsection{Computational Complexity Classes}

Many important problems in computer science, such as combinatorial optimization problems, have been extensively studied from a complexity perspective. Computational complexity, investigated since the
1970s~\cite{sat, karp, lenstra1979computational, cormen2009book}, is the study of the resources, such as time and space, required by algorithms to solve a given problem, abstracting away from specific machine details to focus on the inherent difficulty of the problem itself. Over time, these early investigations led to the formalization of complexity classes that capture the inherent difficulty of computational problems.

Here, we present the relevant computational complexity classes that would be instrumental for our analytical results. While some of these classes are well-known, others are introduced or adapted to address probabilistic and optimization settings specific to our purposes. We begin by listing standard classes for decision and optimization problems.

\begin{itemize}
    \item The class $\mathsf{P}$ (Polynomial Time)~\cite{papadimitriou2003computational} consists of decision problems that can be solved in polynomial time by a deterministic algorithm. 
    \item The class $\mathsf{NP}$ (Non-deterministic Polynomial Time)~\cite{papadimitriou2003computational} consists of decision problems whose “yes” certificates can be verified in polynomial time by a deterministic algorithm.
    \item The class $\mathsf{BPP}$ (Bounded-Error Probabilistic Polynomial Time)~\cite{papadimitriou2003computational} consists of decision problems that can be solved in polynomial time by a randomized (or possibly deterministic) algorithm with high probability of correctness.
    \item The class $\mathsf{PO}$ ($\mathsf{P}$ Optimization)~\cite{manyem2008syntactic} consists of optimization problems that can be solved in polynomial time by a deterministic algorithm.
    \item The class $\mathsf{NPO}$ (NP Optimization)~\cite{krentel1986complexity} consists of optimization problems for which the solutions have polynomial size, and both their feasibility and cost can be determined in polynomial time by a deterministic algorithm.
    \item The class $\mathsf{NPO}\text{-}\mathsf{PB}$ ($\mathsf{NPO}$ Polynomially Bounded)~\cite{krentel1986complexity} consists of $\mathsf{NPO}$ problems whose cost function takes values over a discrete range that is polynomially bounded in the size of the input instance.
    \item The class $\mathsf{APX}$ (Approximable)~\cite{ausiello2012complexity} consists of optimization problems for which there exists a polynomial time deterministic algorithm that produces solutions with a guaranteed fixed approximation ratio \footnote{The approximation ratio, a widely used metric to measure the performance of non-exact algorithms, is defined for a given problem instance and an algorithm as the ratio between the cost of the solution output by the algorithm and the optimal cost of the problem. For a given problem, algorithms that guarantee a fixed approximation factor for every instance are called approximation algorithms.}. 
\end{itemize}

Now, we define some computational classes that, to the best of our knowledge, have not been explicitly introduced in the literature. Firstly, in analogy to the way the classes $\mathsf{PO}$ and $\mathsf{NPO}$ respectively extend $\mathsf{P}$ and $\mathsf{NP}$ to optimization problems, we propose similar extensions for the class $\mathsf{BPP}$ as follows.

\begin{definition}\label{def:BPPO}
The class $\mathsf{BPPO}$ ($\mathsf{BPP}$ Optimization) consists of optimization problems that can be solved in polynomial time by a randomized (or possibly deterministic) algorithm with high probability.
\end{definition}

Finally, in analogy to the way the classes $\mathsf{BPP}$ and $\mathsf{BPPO}$ introduce randomized algorithms to the classes $\mathsf{P}$ and $\mathsf{PO}$, respectively, we introduce randomized algorithms to the $\mathsf{APX}$ class in the proposed class below.

\begin{definition}\label{def:BPAPX}
The class $\mathsf{BP}\text{-}\mathsf{APX}$ (Bounded-Error Probabilistic $\mathsf{APX}$) consists of optimization problems for which there exists a polynomial time randomized (or possibly deterministic) algorithm that, with high probability, produces solutions with a guaranteed fixed approximation ratio.
\end{definition}

Alternatively, $\mathsf{BP}\text{-}\mathsf{APX}$ introduces approximation algorithms for the class $\mathsf{BPPO}$. More generally, Figure~\ref{fig:complexity_diagram} illustrates the relationship between the majority of the complexity classes described in this section, including the newly defined ones. Some of our results will need to use $\mathsf{BPPO}$ and $\mathsf{BP}\text{-}\mathsf{APX}$ in their assumptions. We note here that either both classes are the same, or $\mathsf{BPPO}$ is slightly larger than $\mathsf{PO}$. The same goes for $\mathsf{BP}\text{-}\mathsf{APX}$ and $\mathsf{APX}$. Moreover, since it is unknown whether $\mathsf{BPPO} \subseteq \mathsf{NPO}$ or whether $\mathsf{NPO} \subseteq \mathsf{BPPO}$, we can only conclude that $\mathsf{PO} \subseteq \mathsf{BPPO} \cap \mathsf{NPO}$.

\begin{figure}[H]
    \centering
            \begin{tikzpicture}[scale=0.5]

          \tikzstyle{main node}=[circle, draw, minimum size=1.2cm, inner sep=0pt]

          \node[main node] (A) at (0, 0) {\scriptsize $\mathsf{P}$};
          \node[main node] (B) at (0, -6) {\scriptsize $\mathsf{BPP}$};
          \node[main node] (C) at (5, 0) {\scriptsize $\mathsf{PO}$};
          \node[main node] (D) at (5, -6) {\scriptsize $\mathsf{BPPO}$};
          \node[main node] (E) at (10, 0) {\scriptsize $\mathsf{APX}$};
          \node[main node] (F) at (10, -6) {\scriptsize $\mathsf{BP}\text{-}\mathsf{APX}$};

         \draw[->, double, ] (A) -- (B);
         \draw[->, double, ] (A) -- (C);
         \draw[->, double, ] (B) -- (D);
         \draw[->, double, ] (C) -- (D);
         \draw[->, double, ] (C) -- (E);
         \draw[->, double, ] (D) -- (F);
         \draw[->, double, ] (E) -- (F);
          
         \draw[dashed] (-2, -3) -- (12, -3);
         \draw[dashed] (2.5, 2) -- (2.5, -8);
         \draw[dashed] (7.5, 2) -- (7.5, -8);
        
         \node at (-3.5, -3) {\small \text{Rand.}};
         \node at (2.5, 3) {\small \text{Opt.}};
         \node at (7.5, 3) {\small \text{Approx.}};
            
            \end{tikzpicture}
    \caption{A diagram relating complexity classes. Dashed lines indicate the separation between deterministic and randomized algorithms (horizontal), decision and optimization problems (first vertical line), and exact and approximation algorithms (second vertical line). Observe that, starting from $\mathsf{P}$, one can reach any other class by progressively introducing, if necessary, optimization, randomized algorithms, and approximation algorithms.}
    \label{fig:complexity_diagram}
\end{figure}

{\bf Asymptotic notation.} In complexity theory, understanding the asymptotic scaling behavior is central to characterizing the resources required by an algorithm or computational task. To this end, we use standard asymptotic notation to describe the growth of functions. Given functions $f(n)$ and $g(n)$, we write $f(n) = \mathcal{O}(g(n))$ to indicate that $f$ grows no faster than $g$ up to a constant factor, i.e., there exist constants $c, n_0 > 0$ such that $f(n) \leq c\,g(n)$ for all $n \geq n_0$. Conversely, $f(n) = \Omega(g(n))$ means that $f$ grows at least as quickly as $g$ asymptotically, i.e, $f(n) \geq c\,g(n)$ for all $n\geq n_0$. The notation $f(n) = \Theta(g(n))$ denotes that $f$ and $g$ grow at the same rate up to constant factors, i.e, $f(n)=\mathcal{O}(g(n))$ and $f(n) = \Omega(g(n))$. We also use the little-omega notation $f(n) = \omega(g(n))$ when $f$ grows strictly faster than $g$, i.e., for all $c > 0$, there is a $n_0 > 0$ such that $f(n) > c\,g(n)$ for all $n \geq n_0$. 

We write $\operatorname{poly}(n)$ to denote the set 
$\operatorname{poly}(n) = \{h(n): h(n) \text{ is a polynomial function of } $n$ \}$. Using this definition, statements such as $f(n) = \mathcal{O}(\text{poly}(n))$ or $f(n) = \omega(\text{poly}(n))$ are shorthand for saying that $f(n)$ grows at most like some polynomial (for $\mathcal{O}(\text{poly}(n))$) or strictly faster than any polynomial (for $\omega(\text{poly}(n))$). Formally, for $\mathcal{O}(\text{poly}(n))$ there exists a constant $k$ such that $f(n) = \mathcal{O}(n^k)$, and for $\omega(\text{poly}(n))$ we have $\omega(n^k)$ for all $k$.

\subsection{Quantum Walk Optimization Algorithm (QWOA)}\label{app:QWOA}
The quantum walk optimization algorithm (QWOA)~\cite{marsh2019quantum,marsh2020combinatorial} is a generalization of the well-known quantum approximate optimization algorithm (QAOA)~\cite{farhi2014quantum}. QAOA was proposed to approximately solve unconstrained combinatorial optimization problems. The algorithm is inspired by a Trotterized discretization of the quantum adiabatic algorithm~\cite{qaa1,qaa2}, and approximates its dynamics in the large-depth limit~\cite{farhi2014quantum}. QWOA extends the QAOA framework for constrained combinatorial optimization problems, where the mixing operator in QAOA is replaced with a continuous-time quantum walk (CTQW)~\cite{farhi1998quantum} over the space of feasible solutions. More specifically, with the aid of efficient indexing and un-indexing operators, we work within an indexed subspace where the CTQW can be efficiently implemented~\cite{marsh2020combinatorial}. Typically, QWOA is implemented using a CTQW on a complete graph~\cite{slate2021quantum,bennett2021quantum1,bennett2021quantum2,matwiejew2024quantum}, which is also the version we consider in this work \footnote{Note that, up to a rescaling of the variational parameter, QWOA is equivalent to a variant of QAOA called the Grover Mixer Quantum Alternating Operator Ansatz (GM-QAOA)~\cite{bartschi2020grover}. Accordingly, all results and discussions in this work apply directly to that ansatz as well, and vice versa, meaning that results from that ansatz can be directly applied here.}.

\begin{figure*}[ht]
\centering

\subfigure[]{
\begin{minipage}{0.21\textwidth}
\begin{tikzpicture}[>=stealth, scale=0.65]

\node at (-3,4) {$S'$};
\node at (3,4) {$\{0,\dots,9\}$};

\node (s0)  at (-3,3)    {0};
\node (s1)  at (-3,2.3)  {1};
\node (s3)  at (-3,1.6)  {3};
\node (s4)  at (-3,0.9)  {4};
\node (s6)  at (-3,0.2)  {6};
\node (s8)  at (-3,-0.5) {8};
\node (s9)  at (-3,-1.2) {9};
\node (s11) at (-3,-1.9) {11};
\node (s13) at (-3,-2.6) {13};
\node (s15) at (-3,-3.3) {15};

\node (i0) at (3,3)    {0};
\node (i1) at (3,2.3)  {1};
\node (i2) at (3,1.6)  {2};
\node (i3) at (3,0.9)  {3};
\node (i4) at (3,0.2)  {4};
\node (i5) at (3,-0.5) {5};
\node (i6) at (3,-1.2) {6};
\node (i7) at (3,-1.9) {7};
\node (i8) at (3,-2.6) {8};
\node (i9) at (3,-3.3) {9};

\draw[->] (s0)  -- (i7);  
\draw[->] (s1)  -- (i5);  
\draw[->] (s3)  -- (i4); 
\draw[->] (s4)  -- (i9);
\draw[->] (s6)  -- (i1);
\draw[->] (s8)  -- (i2); 
\draw[->] (s9)  -- (i6);
\draw[->] (s11) -- (i0); 
\draw[->] (s13) -- (i8); 
\draw[->] (s15) -- (i3);

\end{tikzpicture}
\end{minipage}
}
\hfill
\subfigure[]{
\begin{minipage}{0.30\textwidth}
\centering
\scriptsize
\[
\left(
\begin{array}{cccccccccccccccc}
\bm{0}&\bm{1}&{\color{gray} 0}&\bm{1}&\bm{1}&{\color{gray} 0}&\bm{1}&{\color{gray} 0}&\bm{1}&\bm{1}&{\color{gray} 0}&\bm{1}&{\color{gray} 0}&\bm{1}&{\color{gray} 0}&\bm{1}\\
\bm{1}&\bm{0}&{\color{gray} 0}&\bm{1}&\bm{1}&{\color{gray} 0}&\bm{1}&{\color{gray} 0}&\bm{1}&\bm{1}&{\color{gray} 0}&\bm{1}&{\color{gray} 0}&\bm{1}&{\color{gray} 0}&\bm{1}\\
{\color{gray} 0}&{\color{gray} 0}&{\color{gray} 0}&{\color{gray} 0}&{\color{gray} 0}&{\color{gray} 0}&{\color{gray} 0}&{\color{gray} 0}&{\color{gray} 0}&{\color{gray} 0}&{\color{gray} 0}&{\color{gray} 0}&{\color{gray} 0}&{\color{gray} 0}&{\color{gray} 0}&{\color{gray} 0}\\
\bm{1}&\bm{1}&{\color{gray} 0}&\bm{0}&\bm{1}&{\color{gray} 0}&\bm{1}&{\color{gray} 0}&\bm{1}&\bm{1}&{\color{gray} 0}&\bm{1}&{\color{gray} 0}&\bm{1}&{\color{gray} 0}&\bm{1}\\
\bm{1}&\bm{1}&{\color{gray} 0}&\bm{1}&\bm{0}&{\color{gray} 0}&\bm{1}&{\color{gray} 0}&\bm{1}&\bm{1}&{\color{gray} 0}&\bm{1}&{\color{gray} 0}&\bm{1}&{\color{gray} 0}&\bm{1}\\
{\color{gray} 0}&{\color{gray} 0}&{\color{gray} 0}&{\color{gray} 0}&{\color{gray} 0}&{\color{gray} 0}&{\color{gray} 0}&{\color{gray} 0}&{\color{gray} 0}&{\color{gray} 0}&{\color{gray} 0}&{\color{gray} 0}&{\color{gray} 0}&{\color{gray} 0}&{\color{gray} 0}&{\color{gray} 0}\\
\bm{1}&\bm{1}&{\color{gray} 0}&\bm{1}&\bm{1}&{\color{gray} 0}&\bm{0}&{\color{gray} 0}&\bm{1}&\bm{1}&{\color{gray} 0}&\bm{1}&{\color{gray} 0}&\bm{1}&{\color{gray} 0}&\bm{1}\\
{\color{gray} 0}&{\color{gray} 0}&{\color{gray} 0}&{\color{gray} 0}&{\color{gray} 0}&{\color{gray} 0}&{\color{gray} 0}&{\color{gray} 0}&{\color{gray} 0}&{\color{gray} 0}&{\color{gray} 0}&{\color{gray} 0}&{\color{gray} 0}&{\color{gray} 0}&{\color{gray} 0}&{\color{gray} 0}\\
\bm{1}&\bm{1}&{\color{gray} 0}&\bm{1}&\bm{1}&{\color{gray} 0}&\bm{1}&{\color{gray} 0}&\bm{0}&\bm{1}&{\color{gray} 0}&\bm{1}&{\color{gray} 0}&\bm{1}&{\color{gray} 0}&\bm{1}\\
\bm{1}&\bm{1}&{\color{gray} 0}&\bm{1}&\bm{1}&{\color{gray} 0}&\bm{1}&{\color{gray} 0}&\bm{1}&\bm{0}&{\color{gray} 0}&\bm{1}&{\color{gray} 0}&\bm{1}&{\color{gray} 0}&\bm{1}\\
{\color{gray} 0}&{\color{gray} 0}&{\color{gray} 0}&{\color{gray} 0}&{\color{gray} 0}&{\color{gray} 0}&{\color{gray} 0}&{\color{gray} 0}&{\color{gray} 0}&{\color{gray} 0}&{\color{gray} 0}&{\color{gray} 0}&{\color{gray} 0}&{\color{gray} 0}&{\color{gray} 0}&{\color{gray} 0}\\
\bm{1}&\bm{1}&{\color{gray} 0}&\bm{1}&\bm{1}&{\color{gray} 0}&\bm{1}&{\color{gray} 0}&\bm{1}&\bm{1}&{\color{gray} 0}&\bm{0}&{\color{gray} 0}&\bm{1}&{\color{gray} 0}&\bm{1}\\
{\color{gray} 0}&{\color{gray} 0}&{\color{gray} 0}&{\color{gray} 0}&{\color{gray} 0}&{\color{gray} 0}&{\color{gray} 0}&{\color{gray} 0}&{\color{gray} 0}&{\color{gray} 0}&{\color{gray} 0}&{\color{gray} 0}&{\color{gray} 0}&{\color{gray} 0}&{\color{gray} 0}&{\color{gray} 0}\\
\bm{1}&\bm{1}&{\color{gray} 0}&\bm{1}&\bm{1}&{\color{gray} 0}&\bm{1}&{\color{gray} 0}&\bm{1}&\bm{1}&{\color{gray} 0}&\bm{1}&\bm{0}&{\color{gray} 0}&{\color{gray} 0}&\bm{1}\\
{\color{gray} 0}&{\color{gray} 0}&{\color{gray} 0}&{\color{gray} 0}&{\color{gray} 0}&{\color{gray} 0}&{\color{gray} 0}&{\color{gray} 0}&{\color{gray} 0}&{\color{gray} 0}&{\color{gray} 0}&{\color{gray} 0}&{\color{gray} 0}&{\color{gray} 0}&{\color{gray} 0}&{\color{gray} 0}\\
\bm{1}&\bm{1}&{\color{gray} 0}&\bm{1}&\bm{1}&{\color{gray} 0}&\bm{1}&{\color{gray} 0}&\bm{1}&\bm{1}&{\color{gray} 0}&\bm{1}&{\color{gray} 0}&\bm{1}&{\color{gray} 0}&\bm{0}
\end{array}
\right)
\]
\end{minipage}
}
\hfill
\subfigure[]{
\begin{minipage}{0.30\textwidth}
\centering
\scriptsize
\[
\left(
\begin{array}{cccccccccccccccc}
\bm{0}&\bm{1}&\bm{1}&\bm{1}&\bm{1}&\bm{1}&\bm{1}&\bm{1}&\bm{1}&\bm{1}&{\color{gray} 0}&{\color{gray} 0}&{\color{gray} 0}&{\color{gray} 0}&{\color{gray} 0}&{\color{gray} 0}\\
\bm{1}&\bm{0}&\bm{1}&\bm{1}&\bm{1}&\bm{1}&\bm{1}&\bm{1}&\bm{1}&\bm{1}&{\color{gray} 0}&{\color{gray} 0}&{\color{gray} 0}&{\color{gray} 0}&{\color{gray} 0}&{\color{gray} 0}\\
\bm{1}&\bm{1}&\bm{0}&\bm{1}&\bm{1}&\bm{1}&\bm{1}&\bm{1}&\bm{1}&\bm{1}&{\color{gray} 0}&{\color{gray} 0}&{\color{gray} 0}&{\color{gray} 0}&{\color{gray} 0}&{\color{gray} 0}\\
\bm{1}&\bm{1}&\bm{1}&\bm{0}&\bm{1}&\bm{1}&\bm{1}&\bm{1}&\bm{1}&\bm{1}&{\color{gray} 0}&{\color{gray} 0}&{\color{gray} 0}&{\color{gray} 0}&{\color{gray} 0}&{\color{gray} 0}\\
\bm{1}&\bm{1}&\bm{1}&\bm{1}&\bm{0}&\bm{1}&\bm{1}&\bm{1}&\bm{1}&\bm{1}&{\color{gray} 0}&{\color{gray} 0}&{\color{gray} 0}&{\color{gray} 0}&{\color{gray} 0}&{\color{gray} 0}\\
\bm{1}&\bm{1}&\bm{1}&\bm{1}&\bm{1}&\bm{0}&\bm{1}&\bm{1}&\bm{1}&\bm{1}&{\color{gray} 0}&{\color{gray} 0}&{\color{gray} 0}&{\color{gray} 0}&{\color{gray} 0}&{\color{gray} 0}\\
\bm{1}&\bm{1}&\bm{1}&\bm{1}&\bm{1}&\bm{1}&\bm{0}&\bm{1}&\bm{1}&\bm{1}&{\color{gray} 0}&{\color{gray} 0}&{\color{gray} 0}&{\color{gray} 0}&{\color{gray} 0}&{\color{gray} 0}\\
\bm{1}&\bm{1}&\bm{1}&\bm{1}&\bm{1}&\bm{1}&\bm{1}&\bm{0}&\bm{1}&\bm{1}&{\color{gray} 0}&{\color{gray} 0}&{\color{gray} 0}&{\color{gray} 0}&{\color{gray} 0}&{\color{gray} 0}\\
\bm{1}&\bm{1}&\bm{1}&\bm{1}&\bm{1}&\bm{1}&\bm{1}&\bm{1}&\bm{0}&\bm{1}&{\color{gray} 0}&{\color{gray} 0}&{\color{gray} 0}&{\color{gray} 0}&{\color{gray} 0}&{\color{gray} 0}\\
\bm{1}&\bm{1}&\bm{1}&\bm{1}&\bm{1}&\bm{1}&\bm{1}&\bm{1}&\bm{1}&\bm{0}&{\color{gray} 0}&{\color{gray} 0}&{\color{gray} 0}&{\color{gray} 0}&{\color{gray} 0}&{\color{gray} 0}\\
{\color{gray} 0}&{\color{gray} 0}&{\color{gray} 0}&{\color{gray} 0}&{\color{gray} 0}&{\color{gray} 0}&{\color{gray} 0}&{\color{gray} 0}&{\color{gray} 0}&{\color{gray} 0}&{\color{gray} 0}&{\color{gray} 0}&{\color{gray} 0}&{\color{gray} 0}&{\color{gray} 0}&{\color{gray} 0}\\
{\color{gray} 0}&{\color{gray} 0}&{\color{gray} 0}&{\color{gray} 0}&{\color{gray} 0}&{\color{gray} 0}&{\color{gray} 0}&{\color{gray} 0}&{\color{gray} 0}&{\color{gray} 0}&{\color{gray} 0}&{\color{gray} 0}&{\color{gray} 0}&{\color{gray} 0}&{\color{gray} 0}&{\color{gray} 0}\\
{\color{gray} 0}&{\color{gray} 0}&{\color{gray} 0}&{\color{gray} 0}&{\color{gray} 0}&{\color{gray} 0}&{\color{gray} 0}&{\color{gray} 0}&{\color{gray} 0}&{\color{gray} 0}&{\color{gray} 0}&{\color{gray} 0}&{\color{gray} 0}&{\color{gray} 0}&{\color{gray} 0}&{\color{gray} 0}\\
{\color{gray} 0}&{\color{gray} 0}&{\color{gray} 0}&{\color{gray} 0}&{\color{gray} 0}&{\color{gray} 0}&{\color{gray} 0}&{\color{gray} 0}&{\color{gray} 0}&{\color{gray} 0}&{\color{gray} 0}&{\color{gray} 0}&{\color{gray} 0}&{\color{gray} 0}&{\color{gray} 0}&{\color{gray} 0}\\
{\color{gray} 0}&{\color{gray} 0}&{\color{gray} 0}&{\color{gray} 0}&{\color{gray} 0}&{\color{gray} 0}&{\color{gray} 0}&{\color{gray} 0}&{\color{gray} 0}&{\color{gray} 0}&{\color{gray} 0}&{\color{gray} 0}&{\color{gray} 0}&{\color{gray} 0}&{\color{gray} 0}&{\color{gray} 0}\\
{\color{gray} 0}&{\color{gray} 0}&{\color{gray} 0}&{\color{gray} 0}&{\color{gray} 0}&{\color{gray} 0}&{\color{gray} 0}&{\color{gray} 0}&{\color{gray} 0}&{\color{gray} 0}&{\color{gray} 0}&{\color{gray} 0}&{\color{gray} 0}&{\color{gray} 0}&{\color{gray} 0}&{\color{gray} 0}
\end{array}
\right)
\]
\end{minipage}
}

\caption{Illustration of the indexing procedure. (a) An illustrative indexing map $\mathrm{id}:S' \to \{0,\dots,|S'|-1\}$ for a $4$-qubit system. (b) Adjacency matrix $H_M$ of the CTQW defined over the feasible subspace, where the feasible states are irregularly distributed (depicted in bold). (c) Adjacency matrix $\tilde{H}_M$ of the CTQW defined over the indexed subspace, which rearranges the feasible states into a contiguous block $\{0,\dots,|S'|-1\}$ (also shown in bold), enabling the walk to be implemented via the Fourier transform modulo $|S'|$.}
\label{fig:indexing}
\end{figure*}

For a constrained problem instance with a solution space $\mathcal S$ associated with a cost function $C$ and feasible solution space $\mathcal{S'} \subset \mathcal{S}$, QWOA starts by initializing the system in the uniform superposition over $\mathcal{S'}$, i.e., $\ket{\psi_0} = \frac{1}{\sqrt{|\mathcal{S'}|}}\sum_{z\in\mathcal{S'}}\ket{z}$, and then executes the layered variational circuit 
\begin{equation}
    U(\mathbf\gamma, \mathbf t) = \prod_{l=1}^p U_W(t_l) U_C(\gamma_l),
\end{equation}
in which
\begin{itemize}
    \item $p$ is the number of layers of QWOA, also referred to as the QWOA depth;
    \item $\mathbf{\gamma} = (\gamma_1, \gamma_2, \ldots, \gamma_p) \in \mathbb{R}^p$ and $\mathbf{t} = (t_1, t_2, \ldots, t_p) \in \mathbb{R}^p$ are the variational parameters or angles;
    \item $U_C(\gamma_l) =  e^{-i\gamma_l H_C}$, where $H_C$ --- the problem Hamiltonian --- acts as $H_C \ket{z} = C(z) \ket{z}$ for each solution $z \in \mathcal{S}$. In other words, $H_C$ is a diagonal operator with eigenvalues $C(z)$ for each solution $z \in \mathcal{S}$.
    \item $U_W(t_l) = e^{-it_l H_M}$, where $H_M$ is the mixing Hamiltonian, performs a CTQW on a graph with an adjacency matrix $H_M$ over the feasible subspace $S'$. Here, we consider a CTQW on a complete graph. Without loss of generality, we further take $H_M = J$, where $J$ is the all-ones matrix, since the resulting unitary operator is equivalent up to a global phase.
\end{itemize}
The operator $U_W(t_l)$ is implemented as
\begin{equation}
    U_W(t_l) = U^\dagger_{\#} e^{-it_l \tilde{H}_M} U_{\#}.
\end{equation}
Here, 
\begin{itemize}
    \item The indexing unitary $U_{\#}$ implement an efficiently computable bijective function $\operatorname{id}: \mathcal S' \rightarrow \{0, \cdots, |\mathcal S'|-1\}$ that lexicographically indexes the elements of the feasible solution space $\mathcal S'$, while the un-indexing unitary $U^\dagger_{\#}$ implements its inverse. The codomain of the function $\operatorname{id}$ is referred to as the indexed subspace.
    \item The operator $\tilde{H}_M$ is the mixing Hamiltonian corresponding to $H_M$ in the indexed subspace. 
\end{itemize}

In simple terms, the indexing and un-indexing operations are used to map the irregular feasible subspace $S'$ onto an ordered indexed subspace, where the CTQW can be implemented via the Fourier transform modulo $|\mathcal{S'}|$~\cite{cleve2000fast}. See Figure~\ref{fig:indexing} for an illustrative example. The indexing procedure can also be used to efficiently implement the initial state. In particular, the uniform superposition in the indexed basis, $\frac{1} {\sqrt{|\mathcal{S'}|}}\sum_{z=0}^{|\mathcal{S'}|-1}\ket{z}$,
can be prepared via the Fourier transform modulo $|\mathcal{S'}|$. Then, applying the un-indexing unitary $U_\#^{\dagger}$ yields the corresponding state over the feasible subspace,
$\ket{\psi_0} = \frac{1}{\sqrt{|\mathcal{S'}}|}\sum_{z\in\mathcal{S'}}\ket{z}$.

Efficient indexing and un-indexing algorithms are not available in general for arbitrary combinatorial optimization problems. Indeed, there exist problems, such as Minimal Vertex Cover and Maximal Cliques, for which even counting the number of solutions is computationally hard~\cite{valiant1979complexity}. The existence of such algorithms depends on the combinatorial organization of the feasible objects. As discussed by Marsh and Wang~\cite{marsh2020combinatorial}, several well-structured families admit such procedures, including $k$-combinations (e.g., the $k$-Densest Subgraph and Max Bisection problems), permutations (e.g., Traveling Salesman Problem), and Catalan-type lattice paths. Additional examples of combinatorial classes with efficient indexing can be found in Loehr’s book~\cite{bijective}. There also exist efficient indexing schemes for highly applied combinatorial optimization problems, such as Capacitated Vehicle Routing~\cite{bennett2021quantum1} and the Portfolio Optimization~\cite{slate2021quantum}. More broadly, many recursively defined combinatorial families fall under Wilf’s framework~\cite{wilf1977unified}, where efficient indexing follows from recursive counting rules.

The state $U(\mathbf{\gamma}, \mathbf{t}) \ket{\psi_0}$ --- called the QWOA final state ---  is prepared and measured in the computational basis. The measurement outcome is a bitstring from which the cost can be efficiently computed. This process is repeated multiple times in order to estimate the expectation value of the problem Hamiltonian $H_C$, i.e., $\bra{U(\mathbf \gamma, \mathbf t)} H_C \ket{U(\mathbf \gamma, \mathbf t)}$, which is optimized using some classical optimization algorithm, such as Nelder-Mead~\cite{nelder_mead}, by iteratively updating the parameters $(\mathbf{\gamma}, \mathbf{t})$. As shown by Marsh and Wang~\cite{marsh2019quantum}, an efficient execution of QWOA (and likewise QAOA) requires that the input optimization problem belongs to the $\mathsf{NPO}\text{-}\mathsf{PB}$ class, ensuring that the expectation value can be estimated to fixed precision using a polynomial number of circuit executions. Naturally, another necessary condition for the efficiency of the algorithm is the existence of efficient indexing and un-indexing procedures.

Note that QWOA can also be applied to unconstrained problems by omitting the indexing and un-indexing operators. Furthermore, the original QAOA is a special case of QWOA where the indexing and un-indexing operators are also omitted, and the mixing Hamiltonian ---- known in this context as the transverse field ---- is given by $H_M = \sum_{j=1}^n X_j$, where $X_j$ is the Pauli $X$ operator applied to qubit $j$ and $n$ denotes the number of qubits. It is worth mentioning that the QAOA mixing operator effectively implements a CTQW on a hypercube graph.

A key property of QWOA is that its expectation value is invariant under any permutation of the basis states~\cite{headley2023problem}. This symmetry has fundamental implications. One of them is that the performance of the algorithm depends only on how the costs are statistically distributed, meaning that QWOA is inherently unable to exploit any structural features of the problem instance, such as correlations between solutions. As a consequence, the expectation value of QWOA can be written solely in terms of the probability distribution associated with the solution spaces (see Ref.~\cite{headley2023problem}) --- in other words, the performance of QWOA depends only on the spectrum of the problem Hamiltonian. At the same time, this agnostic character of QWOA with respect to the structure of the problem fundamentally bounds its performance to one that is analogous to Grover’s algorithm~\cite{grover1996fast} for unstructured search. To discuss this limitation, we define the following performance metric for QWOA.

\begin{definition}[Optimal QWOA depth]
The optimal QWOA depth for a problem $\mathcal P$, denoted by $p^*_{\mathcal{P}}$, is the minimum number of layers required by QWOA to solve $\mathcal{P}$. We write $p^*$ to denote the optimal QWOA depth for any given problem.
\end{definition}

Bridi and Marquezino~\cite{bridi2024analytical} and Xie \emph{et al.}~\cite{xie2025performance} independently proved that the probability of measuring any basis state on QWOA is bounded by a quadratic growth with respect to the number of layers. Applying this result to the states that encode optimal solutions, we can conclude that the optimal QWOA depth is lower bounded by
\begin{equation}\label{eqn:optimal_depth}
 p^* = \Omega\left(\sqrt{\frac{|S'|}{|S_{\text{opt}}|}}\right),
\end{equation} 
where $|S_{\text{opt}}|$ denotes the number of optimal solutions. Note that Eq.~\eqref{eqn:optimal_depth} implies that QWOA is bounded by a quadratic speed-up over the random sampling, a classical algorithm defined as follows.
\begin{definition}[Random sampling algorithm] \label{def:random_sampling}
Random sampling is a classical algorithm that independently and uniformly samples the feasible solution space a given number of times and outputs the best solution found according to the cost function.
\end{definition}
In particular, the random sampling finds an optimal solution with high probability using $\Theta(|S'|/ |S_{\text{opt}}|)$ samples. Observe that the limitation imposed by Eq.~\eqref{eqn:optimal_depth} for combinatorial optimization is analogous to the bound of the unstructured search problem in quantum computing, for which Grover's algorithm is optimal~\cite{singleOptimalGrover, generalOptimalGrover}. Another manifestation of this quadratic Grover-style speed-up bound, provided by Bridi and Marquezino~\cite{bridi2024analytical}, is as follows: with $p$ layers, QWOA can only reach an expectation value corresponding to the top~$\Omega(1/p^2)$ fraction of the best solutions \footnote{For a statement of this result using random variables to model the solution space, see Corollary $4$ of Ref.~\cite{bridi2024analytical}.}, whereas the random sampling with $p$ samples would achieve in expectation a solution in the top~$\Theta(1/p)$. We further define another important performance metric for QWOA as follows.

\begin{definition}[$c$-approximate QWOA depth]
The $c$-approximate QWOA depth, for a constant $c$ and problem $\mathcal P$, denoted $p^c_{\mathcal{P}}$, is the minimum number of layers required by QWOA to achieve an approximation ratio $c$ for $\mathcal{P}$. We write $p^c$ to denote the corresponding quantity for any given problem.
\end{definition}

\subsection{Dynamical Lie Algebra (DLA)}\label{app:DLA}
Dynamical Lie algebras (DLAs)~\cite{larocca2022diagnosing, larocca2023theory, meyer2023exploiting, ragone2024lie, allcock2024dynamical} are a useful tool for characterizing the expressiveness of the PQCs. All operations on a quantum circuit acting on a $d$-dimensional state correspond to an element of the unitary group $\operatorname{U}(d)$, generated by the Lie algebra $\mathfrak{u}(d)$. DLA has been shown to characterize the unitaries generated by parameterized quantum circuits of the form
\begin{equation}\label{eqn:PQC}
    U(\boldsymbol{\theta}) = \prod_{l=1}^L U_l(\boldsymbol{\theta}_l), \quad U_l(\boldsymbol{\theta}_l) = \prod_{k=0}^K e^{-iH_k\theta_{lk}},
\end{equation}
where the circuit consists of $L$ layers, each with a set of tunable parameters $\boldsymbol{\theta}_l$ associated with unitaries generated by Hamiltonians $\{H_k\}$~\cite{d2021introduction,zeier2011symmetry}. The set of generators forms the foundation of the DLA.
\begin{definition}[Set of generators, adapted from Ref.~\cite{larocca2022diagnosing}, Definition~2]\label{def:generators}
    Given a parametrized quantum circuit of the form in Eq.~\ref{eqn:PQC}, we define the set of generators $\mathcal G = \{H_k\}_{k=0}^K$ as the set (of size $|\mathcal G| = K+1$) of the Hermitian operators that generate the unitaries in a single layer of $U(\boldsymbol{\theta})$ \footnote{For simplicity, we do not impose the usual traceless condition on the generators, as this choice is convenient for the nature of our arguments. The only difference is the presence of identity terms, which merely introduce a global phase.}. 
\end{definition}
The DLA is a subspace of the operator space, generated by the span of nested commutators of the ansatz's set of generators. 
\begin{definition}[Dynamical Lie algebra (DLA), adapted from Ref.~\cite{larocca2022diagnosing}, Definition~3]\label{def:DLA}
    Let $\mathcal G$ be a set of generators (see Definition~\ref{def:generators}). The Dynamical Lie Algebra (DLA) $\mathfrak g$ is the subalgebra of $\mathfrak{u}(d)$ spanned by the repeated nested commutators of the elements in $\mathcal G$, i.e., 
    \begin{equation}
        \mathfrak g = \langle iH_0, \cdots, iH_K\rangle_{\operatorname{Lie}}\subseteq \mathfrak{u}(d),
    \end{equation} 
    Here, $\mathfrak{u}(d)$ is the unitary Lie algebra of degree $d$, i.e., the Lie algebra formed by the set of $d \times d$ skew-Hermitian matrices.  $\langle S\rangle_{\operatorname{Lie}}$ denotes the Lie closure, i.e., the set obtained by repeatedly taking the nested commutators between the elements in $S$.

\end{definition}

We use $\mathfrak g_{\operatorname{QWOA}, \mathcal P}$ to denote the DLA corresponding to the QWOA ansatz for an input problem $\mathcal P$ and $\mathfrak g_{\operatorname{QWOA}}$ to indicate the DLA of the QWOA ansatz for any given problem. The Hamiltonians $H_C$ and $H_M$ form the set of generators for QWOA.

The DLA quantifies the extent to which the circuit can explore the full unitary group or a restricted subspace, making it a fundamental measure of expressivity. If the DLA spans the full Lie algebra $\mathfrak{u}(d)$ for a $d$-dimensional Hilbert space, the ansatz is considered maximally expressive. However, in many practical scenarios, the DLA forms a lower-dimensional subalgebra, imposing structure on the reachable unitary transformations. In this sense, the dimension of the DLA plays a key role in quantifying the expressivity of a PQC. In fact, as pointed out by Larocca \emph{et al.}~\cite{larocca2023theory}, the dimension of the DLA provides an upper bound on the critical number of circuit parameters required for a PQC to be deemed overparametrized, and the maximal rank attainable by the quantum Fisher information matrix (QFIM)~\cite{liu2020quantum} and Hessian matrices~\cite{fort2019emergent}. This provides us with a formal sufficiency condition of overparametrization.

\begin{definition}[Overparametrization, adapted from Ref.~\cite{larocca2023theory}]
\label{def:op_suff}
    A parametrized quantum circuit is deemed overparametrized if the number of circuit parameters exceeds the dimension of the dynamic Lie algebra.
\end{definition}

Note that in Ref.~\cite{larocca2023theory}, a PQC is defined to be overparametrized when the number of parameters is large enough that the QFIM attains its maximal achievable rank and remains saturated upon the introduction of additional parameters. Since the rank of the QFIM cannot exceed the number of circuit parameters, it follows from the definition that the number of parameters required for overparametrization must be at least as large as the maximal achievable rank of the QFIM. Combining this observation with the fact that the dimension of the DLA upper bounds the maximal rank attainable by the QFIM, we arrive at the sufficiency condition in Definition~\ref{def:op_suff}.

From a geometric perspective, the rank of the QFIM quantifies the number of independent directions in state space that can be accessed by tuning the circuit parameters. In the overparametrized regime, the circuit possesses enough parameters to explore all independent directions permitted by its underlying Lie algebra.

\section{Main Results}\label{sec:main_results}

\subsection{Bounding the DLA Dimension for QWOA}
We provide an upper bound on the DLA dimension for QWOA that scales quadratically in $m$. Recall that $m$ is the number of distinct cost values over the feasible solutions, or equivalently, the number of distinct eigenvalues of $H_C$ restricted to the feasible subspace. The key idea behind the proof is the following. Since QWOA is invariant under permutations of basis states, solutions with the same cost can be grouped in the problem Hamiltonian $H_C$. Due to the diagonal structure of the problem Hamiltonian $H_C$ and the uniform structure of the mixing Hamiltonian $H_M$, the nested commutators of the QWOA generators will share the same partition structure of $m^2$ blocks, where each block has constant entries. This means that the degree of freedom in the DLA space generated by the QWOA ansatz is bounded by $\mathcal{O}(m^2)$.

\begin{theorem}[General bound of DLA dimension for QWOA]\label{thm:DLA_QWOA}
For any instance of a combinatorial optimization problem, the DLA dimension of QWOA satisfies $\operatorname{dim}(\mathfrak g_{\operatorname{QWOA}}) \leq m^2 + 1$. In particular, $\operatorname{dim}(\mathfrak g_{\operatorname{QWOA}}) \leq m^2$ when, for each $1 \leq j \leq m$, the cost value $x_j$ either has multiplicity $1$ ($k_j = 1$) or is $0$ ($x_j = 0$).
\end{theorem}

\begin{proof}
We first consider the unconstrained case. The problem Hamiltonian $H_C$ is diagonal in the computational basis and can be rewritten in a block diagonal form as
\begin{align}
    H_C=\begin{bmatrix}
        x_1 I_{k_1}&\bar{0}&\cdots&\bar{0}\\
        \bar{0}&x_2I_{k_2}&\cdots&\bar{0}\\
        \vdots&\vdots&\ddots&\vdots\\
        \bar{0}&\bar{0}&\cdots&x_mI_{k_m}\\
    \end{bmatrix},
\end{align}
where $I_x$ is the identity matrix with the dimension $x$ and $\bar{y}$ denotes a matrix with all elements equal to $y$. We call such a partitioning of $H_C$ into blocks
a \emph{block partitioning pattern}. The generator $H_M$ can be written using the block partitioning pattern as
\begin{align}
    H_M=\begin{bmatrix}
        \bar{1}&\bar{1}&\cdots&\bar{1}\\
        \bar{1}&\bar{1}&\cdots&\bar{1}\\
        \vdots&\vdots&\ddots&\vdots\\
        \bar{1}&\bar{1}&\cdots&\bar{1}\\
    \end{bmatrix}.
\end{align}

Let $\Lambda$ be the set of skew-Hermitian matrices with the same block partitioning pattern as $H_C$ and $H_M$ such that all elements in any block are the same. In this case, note that $i H_M\in\Lambda$, but (in general) $i H_C\notin\Lambda$. From the definition of matrix multiplication and the fact that the skew-Hermitian operators are closed under the commutator, it follows that $[i H_C,A] \in \Lambda$ and $[A,B]\in\Lambda$ for $A,B\in\Lambda$. Consequently, all nested commutators are elements of $\Lambda$. Observe that $\Lambda$ is isomorphic to the unitary Lie algebra $\mathfrak u(m)$, therefore their size is $\vert \Lambda\vert = m^2$. Considering the worst case, where $i H_C\notin\Lambda$, we can conclude that $\operatorname{dim}(\mathfrak g_{\operatorname{QWOA}}) \leq m^2 + 1$. On the other hand, when $i H_C\in\Lambda$, we can tighten the bound to $\operatorname{dim}(\mathfrak g_{\operatorname{QWOA}}) \leq m^2$. This case occurs if and only if, for each $1 \leq j \leq m$, $k_j = 1$ or $x_j = 0$, finishing the unconstrained case. 

Now, consider the constrained problems case. Observe that we can express $U_{C}(\gamma_l)$ as $e^{-i\gamma_l H_C} = U^\dagger_{\#} e^{-i\gamma_l \tilde{H}_C} U_{\#}$, where $\tilde{H}_C$ acts similarly to $H_C$, but over the basis states of the indexed subspace. Similarly, note that $U_{\#}\ket{\psi_0}$ is a uniform superposition over the indexed subspace. Since $U_{\#}$ and $U^\dagger_{\#}$ define a change of basis, and the DLA is invariant under basis transformations, the analysis can be carried out in the indexed subspace using the indexed operators $\tilde{H}_C$ and $\tilde{H}_M$ . In this case, since our argument for the unconstrained case is based on the spectrum of $H_C$, which is the same as that of $\tilde{H}_C$, the result holds for the constrained case as well.
\end{proof}

Note that our upper bound is intuitive given the structural properties of QWOA. Since the ansatz is agnostic to the combinatorial structure of the problem and its dynamics are fully determined by the spectrum of the problem Hamiltonian, it follows naturally that the expressivity --- as captured by the DLA dimension --- should depend only on spectral properties. In particular, the fact that the DLA dimension grows at most quadratically with the number of distinct eigenvalues reflects the limited degrees of freedom available when the ansatz cannot distinguish between solutions beyond their cost values.

Furthermore, unlike previous DLA analyses in the literature, which typically rely on representations in terms of Pauli operators and sophisticated tools from group theory (for instance, Ref.~\cite{allcock2024dynamical}), our approach in the QWOA setting takes advantage of two key structural features of the algorithm: permutation invariance of basis states and the uniformity of the Hamiltonian mixing. These properties enable a much simpler and more intuitive analysis, allowing us to work directly with the spectra of the problem Hamiltonians rather than their full operator structure. As a result, the dimension of the DLA can be bounded using elementary linear algebra and block-structured matrix arguments.

\subsection{Overparametrization in QWOA for NPO-PB problem}

For $\mathsf{NPO}\text{-}\mathsf{PB}$ problems that admit efficient indexing and un-indexing algorithms, we derive complexity-theoretic conditions under which QWOA requires overparameterization, either to solve the problem or to achieve any fixed approximation ratio. Our approach combines implications of Theorem~\ref{thm:DLA_QWOA} with new insights into the performance bound of QWOA established in Refs.~\cite{bridi2024analytical,xie2025performance}. Our findings are general within the context of QWOA, holding for any target problem, provided that two necessary conditions for the efficient implementation of QWOA are met: (a) the problem belongs to the $\mathsf{NPO}\text{-}\mathsf{PB}$ class and (b) it admits efficient indexing and un-indexing procedures.

We begin by observing an immediate consequence of Theorem~\ref{thm:DLA_QWOA}. For $\mathsf{NPO}\text{-}\mathsf{PB}$ problems, whose window of possible costs --- and consequently $m$ --- is bounded by a polynomial function in the input size, we have that the dimension of the DLA is also polynomially bounded, as stated in the following corollary.

\begin{corollary}\label{cor:NPO-BP-DLA}
Let $\mathcal P$ be an $\mathsf{NPO}\text{-}\mathsf{PB}$ problem with input size $s$. Thus, $\operatorname{dim}(\mathfrak g_{\operatorname{QWOA}, \mathcal P}) =  \mathcal{O}(\operatorname{poly}(s))$. 
\end{corollary}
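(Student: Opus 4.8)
The plan is straightforward: apply Theorem~\ref{thm:DLA_QWOA} and use the defining property of the class $\mathsf{NPO}\text{-}\mathsf{PB}$ to control the parameter $m$. First I would recall that $\mathcal{P}$ being $\mathsf{NPO}\text{-}\mathsf{PB}$ means, by definition, that the cost function takes values in a discrete range whose size is polynomially bounded in the input size $s$; that is, there is a polynomial $q$ such that the number of distinct cost values attainable by $C$ on any instance of size $s$ is at most $q(s)$. Since $m$ — the number of distinct cost values over the \emph{feasible} solutions — is at most the total number of distinct cost values over all solutions, we get $m \leq q(s) = \mathcal{O}(\operatorname{poly}(s))$.

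Next I would invoke Theorem~\ref{thm:DLA_QWOA}, which gives the unconditional bound $\operatorname{dim}(\mathfrak g_{\operatorname{QWOA}, \mathcal P}) \leq m^2 + 1$ for the QWOA ansatz applied to any combinatorial optimization instance, including the constrained case handled via the indexed subspace. Combining the two observations,
\begin{equation}
\operatorname{dim}(\mathfrak g_{\operatorname{QWOA}, \mathcal P}) \leq m^2 + 1 \leq q(s)^2 + 1 = \mathcal{O}(\operatorname{poly}(s)),
\end{equation}
which is exactly the claimed statement. The argument is essentially a one-line corollary once the definitions are unpacked, so there is no substantive mathematical obstacle; the proof reduces to correctly matching the definition of $\mathsf{NPO}\text{-}\mathsf{PB}$ to the quantity $m$ appearing in Theorem~\ref{thm:DLA_QWOA}.

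The only point requiring a small amount of care — and the closest thing to an ``obstacle'' — is the relationship between ``cost values over all solutions'' (which is what the standard definition of $\mathsf{NPO}\text{-}\mathsf{PB}$ bounds) and ``cost values over feasible solutions'' (which is what $m$ counts). As noted, the feasible set is a subset of the full solution space, so restricting the cost function can only decrease the number of distinct values; hence the polynomial bound transfers without loss. I would state this containment explicitly so the reader sees that no hidden assumption about the constraint structure is needed, and then conclude.
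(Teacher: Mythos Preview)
Your proposal is correct and takes essentially the same approach as the paper, which simply remarks that for $\mathsf{NPO}\text{-}\mathsf{PB}$ problems the window of possible costs --- and hence $m$ --- is polynomially bounded in $s$, so Theorem~\ref{thm:DLA_QWOA} immediately yields the polynomial bound on the DLA dimension. Your additional care about distinguishing cost values over all solutions versus feasible solutions is a nice clarification that the paper omits, but the argument is otherwise identical.
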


Now that we have bounded the DLA dimension of QWOA for $\mathsf{NPO}\text{-}\mathsf{PB}$ problems, polynomially, we turn our attention to the question of finding a solution to these problems. Two novel consequences of the Grover-style quadratic speed-up bound of QWOA established in Refs.~\cite{bridi2024analytical, xie2025performance} are stated in the theorem below.

\begin{theorem}\label{thm:QWOA_nonpoly}
Let $\mathcal{P}$ be an $\mathsf{NPO}$ problem with efficient indexing and un-indexing algorithms and with input size $s$. 
\begin{enumerate}
    \item If $\mathcal P \notin \mathsf{BPPO}$, then $p^*_{\mathcal P} \notin \mathcal{O}(\operatorname{poly}(s))$. \label{item1:thmQWOAsuperpoly}
    \item If $\mathcal P \notin \mathsf{BP}\text{-}\mathsf{APX}$, then $p^c_{\mathcal P} \notin \mathcal{O}(\operatorname{poly}(s))$ for any constant $c$. \label{item2:thmQWOAsuperpoly}
\end{enumerate}
\end{theorem}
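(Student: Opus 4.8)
The plan is to prove the contrapositive of each item by showing that a polynomially bounded optimal (resp. $c$-approximate) QWOA depth would place the problem in $\mathsf{BPPO}$ (resp. $\mathsf{BP}\text{-}\mathsf{APX}$). The engine is the following randomized algorithm: run QWOA with the (assumed polynomial) number of layers $p^*_{\mathcal P}$, using the efficient indexing/un-indexing procedures so that each circuit is implementable in polynomial time; measure in the computational basis; repeat the whole procedure a polynomial number of times; and output the best feasible solution seen (feasibility and cost can be checked in polynomial time since $\mathcal P \in \mathsf{NPO}$). So the whole algorithm runs in randomized polynomial time, and the only thing that needs justification is the success probability.

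For item~\ref{item1:thmQWOAsuperpoly}, suppose $p^*_{\mathcal P} \in \mathcal{O}(\operatorname{poly}(s))$. By definition of the optimal QWOA depth, with $p^*_{\mathcal P}$ layers QWOA solves $\mathcal P$ --- i.e., there exist variational parameters for which the QWOA final state yields an optimal solution upon measurement with at least some constant probability (or, more conservatively, with probability at least $1/\operatorname{poly}(s)$; either reading works). First I would fix those optimizing parameters (they are a fixed function of the instance, determined by the analysis underlying the definition), so the inner quantum subroutine produces an optimal solution with probability $\Omega(1/\operatorname{poly}(s))$ per shot. Standard amplification --- repeating $\mathcal{O}(\operatorname{poly}(s))$ times and keeping the best --- boosts this to success probability $\ge 2/3$, which is exactly the requirement for membership in $\mathsf{BPPO}$ (Definition~\ref{def:BPPO}). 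Hence $\mathcal P \in \mathsf{BPPO}$, contradicting the hypothesis, so $p^*_{\mathcal P} \notin \mathcal{O}(\operatorname{poly}(s))$.

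Item~\ref{item2:thmQWOAsuperpoly} is identical in structure with "optimal solution" replaced by "solution of approximation ratio $c$": if $p^c_{\mathcal P} \in \mathcal{O}(\operatorname{poly}(s))$, then with that many layers QWOA returns a $c$-approximate solution with non-negligible probability, and the same measure-repeat-keep-the-best wrapper gives a polynomial-time randomized algorithm that with probability $\ge 2/3$ outputs a solution with guaranteed approximation ratio $c$, placing $\mathcal P \in \mathsf{BP}\text{-}\mathsf{APX}$ (Definition~\ref{def:BPAPX}) and contradicting the hypothesis. I would remark that the Grover-style bound of Eq.~\eqref{eqn:optimal_depth} from Refs.~\cite{bridi2024analytical,xie2025performance} is what gives this theorem teeth downstream (it forces $p^*_{\mathcal P}$ to be large whenever the optimal-solution fraction is exponentially small), but it is not actually needed for the proof of \emph{this} statement --- the argument here only uses the \emph{definitions} of the QWOA depths together with the efficient-implementation hypotheses. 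The main subtlety to get right is the bookkeeping of success probabilities: I need the per-shot success probability of the depth-$p^*_{\mathcal P}$ QWOA circuit to be at least inverse-polynomial (not merely positive) so that polynomial amplification suffices; this should be read out of the convention adopted in the definition of "solves $\mathcal P$" / "achieves approximation ratio $c$," and I would state that convention explicitly at the start of the proof so the amplification step is unambiguous. The other point worth a sentence is that the efficient indexing/un-indexing assumption is exactly what is needed to implement $U_W(t_l)=U_\#^\dagger e^{-it_l\tilde H_M}U_\#$ in polynomial time, and that $\mathcal P\in\mathsf{NPO}$ is what lets the classical wrapper evaluate and compare the sampled solutions.
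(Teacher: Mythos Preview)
Your contrapositive strategy is right, but the algorithm you build to witness membership in $\mathsf{BPPO}$ (resp.\ $\mathsf{BP}\text{-}\mathsf{APX}$) is a \emph{quantum} algorithm: you propose to run the QWOA circuit itself, measure, and repeat. That would place $\mathcal P$ in a quantum analog of $\mathsf{BPPO}$, not in $\mathsf{BPPO}$ as defined in Definition~\ref{def:BPPO}, which --- being the optimization extension of $\mathsf{BPP}$ --- refers to classical randomized (probabilistic) algorithms. So the implication ``$p^*_{\mathcal P}\in\mathcal O(\operatorname{poly}(s))\Rightarrow \mathcal P\in\mathsf{BPPO}$'' does not follow from your construction, and the contrapositive does not give item~\ref{item1:thmQWOAsuperpoly}. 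The same gap appears verbatim in your argument for item~\ref{item2:thmQWOAsuperpoly}.

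This is exactly the step where the Grover-style bound you set aside is doing the real work in the paper's proof. From Eq.~\eqref{eqn:optimal_depth} one has $p^*_{\mathcal P}=\Omega\bigl(\sqrt{|S'|/|S_{\text{opt}}|}\bigr)$, so if $p^*_{\mathcal P}\in\mathcal O(\operatorname{poly}(s))$ then $|S'|/|S_{\text{opt}}|\in\mathcal O(\operatorname{poly}(s))$ as well; but $\Theta(|S'|/|S_{\text{opt}}|)$ is precisely the sample complexity of the \emph{classical} random sampling procedure (Definition~\ref{def:random_sampling}), and the $\mathsf{NPO}$ and efficient-indexing hypotheses make each sample polynomial-time. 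That classical algorithm is what certifies $\mathcal P\in\mathsf{BPPO}$. For item~\ref{item2:thmQWOAsuperpoly} the paper uses the companion bound from Ref.~\cite{bridi2024analytical} that QWOA with $p$ layers only reaches the top $\Omega(1/p^2)$ fraction of solutions, which again transfers a polynomial $p^c_{\mathcal P}$ to a polynomial-sample classical random-sampling approximation, yielding $\mathcal P\in\mathsf{BP}\text{-}\mathsf{APX}$. In short: the quadratic speed-up bound is not ``downstream teeth'' here --- it is the bridge from the quantum depth assumption to a \emph{classical} randomized algorithm, and without it the proof does not close.
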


\begin{proof}
\leavevmode
\begin{enumerate} 
    \item If QWOA can find optimal solutions to a problem $\mathcal{P}$ using a polynomial number of layers, i.e., $p^*_{\mathcal P} = \mathcal{O}(\operatorname{poly}(s))$, then its quadratic speed-up bound over random sampling implies that random sampling can also find optimal solutions with high probability using a polynomial number of samples. Since $\mathcal{P}$ is an $\mathsf{NPO}$ problem with efficient indexing and un-indexing algorithms, each sample can be computed in polynomial time. In this way, the random sampling procedure runs in polynomial time and therefore $\mathcal{P}$ belongs to \textsf{BPPO}. Item \ref{item1:thmQWOAsuperpoly} follows from the contraposition of the implication $p^*_{\mathcal P} = \mathcal{O}(\operatorname{poly}(s)) \Rightarrow \mathcal{P} \in \textsf{BPPO}$.
    \item An aforementioned result by Bridi and Marquezino~\cite{bridi2024analytical} establishes that QWOA exhibits a quadratic speed-up bound over random sampling with respect to the top fraction of the best solutions achieved by the algorithm. Consequently, since a fixed approximation ratio corresponds to a fixed cost, which in turn corresponds to a fixed top fraction of the best solutions, this quadratic bound implies that if QWOA attains an approximation ratio $c$ with a polynomial number of layers, then random sampling must also reach it with high probability using a polynomial number of samples. Therefore, with analogous arguments of those used to prove item~\ref{item1:thmQWOAsuperpoly}, we can conclude that $\exists c: p^c_{\mathcal P} = \mathcal{O}(\operatorname{poly}(s)) \Rightarrow \mathcal{P} \in \mathsf{BP}\text{-}\mathsf{APX}$ and item~\ref{item2:thmQWOAsuperpoly} follows from the contraposition of this implication. 
\end{enumerate}
\end{proof}

Observe that although in general $p^*_{\mathcal P} \notin \mathcal{O}(\mathrm{poly}(s))$ does not imply that $p^*_{\mathcal P}$ grows superpolynomially, i.e., $p^*_{\mathcal P} \in \omega(\mathrm{poly}(s))$, this implication does hold when $p^*_{\mathcal P}$ is monotonically increasing. For most problems of interest, larger instances typically require deeper ansätze, making it reasonable to assume that the optimal QWOA depth is a monotonic function --- or at least asymptotically increasing. Under this assumption, non-polynomially bounded growth is effectively superpolynomial. Nevertheless, pathological cases may arise in which the optimal QWOA depth exhibits irregular behavior. For instance, it is theoretically possible that $p^*_{\mathcal P}$ grows polynomially for odd values of $s$ and superpolynomially for even values (and vice versa). In such cases, the function would not belong to $\mathcal{O}(\mathrm{poly}(s))$, yet would also fail to lie in $\omega(\mathrm{poly}(s))$.

We now present two remarks regarding item~\ref{item1:thmQWOAsuperpoly} of Theorem~\ref{thm:QWOA_nonpoly}. While these remarks are stated in the context of item~\ref{item1:thmQWOAsuperpoly}, analogous reasoning applies to item~\ref{item2:thmQWOAsuperpoly} by replacing $\mathsf{BPPO}$ and $\mathsf{PO}$ with $\mathsf{BP\text{-}APX}$ and $\mathsf{APX}$, respectively.
Firstly, the inverse of item~\ref{item1:thmQWOAsuperpoly} of Theorem~\ref{thm:QWOA_nonpoly} does not hold, i.e, $\mathcal{P} \in \textsf{BPPO}$ does not necessarily imply that the optimal QWOA depth is polynomially bounded in the input size. Indeed, the fact that a problem belongs to $\textsf{BPPO}$ does not guarantee that it can be efficiently solved by random sampling, which is the baseline performance that QWOA quadratically improves upon. Secondly, restricting the theorem to $\mathsf{PO}$ rather than $\mathsf{BPPO}$ would yield a stronger result. However, unless $\mathsf{PO} = \mathsf{BPPO}$, the arguments used in the proof of item~\ref{item1:thmQWOAsuperpoly} do not allow us to conclude that $\mathcal P \notin \mathsf{PO}$ implies $p^*_{\mathcal P} \notin \mathcal{O}(\operatorname{poly}(s))$. Whether there exists a problem in $\textsf{BPP} - \textsf{PO}$ that QWOA can still solve with polynomial depth remains an open question.

Now, combining Corollary~\ref{cor:NPO-BP-DLA} and Theorem~\ref{thm:QWOA_nonpoly} along with Definition~\ref{def:op_suff}, we can derive the following theorem, the promised result of this section.

\begin{theorem}[Conditions for QWOA Overparameterization]\label{thm:NPOPB} Let $\mathcal{P}$ be an $\mathsf{NPO}\text{-}\mathsf{PB}$ problem with an efficient indexing and algorithms. 
\begin{enumerate}
    \item If $\mathcal P \notin \textsf{BPPO}$, then QWOA requires overparameterization to solve $\mathcal P$. \label{item1:thmNPOPB}
    \item If $\mathcal P \notin \mathsf{BP}\text{-}\mathsf{APX}$, then QWOA requires overparameterization to achieve any fixed approximation ratio for $\mathcal P$. \label{item2:thmNPOPB}
\end{enumerate}
\end{theorem}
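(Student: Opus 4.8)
The plan is to sandwich the QWOA depth between two quantities: the polynomially bounded DLA dimension on one side, and the super‑polynomial number of layers needed to solve (or approximate) $\mathcal P$ on the other. First I would recall the consequence of Larocca \emph{et al.}~\cite{larocca2023theory} already cited in Section~\ref{app:DLA}: the critical depth $p_c$ at which the QWOA ansatz enters the overparameterization regime satisfies $p_c \le \dim(\mathfrak g_{\operatorname{QWOA},\mathcal P})$ (using the linear correspondence between the number of layers and the number of variational parameters noted earlier). By Corollary~\ref{cor:NPO-BP-DLA}, since $\mathcal P \in \mathsf{NPO}\text{-}\mathsf{PB}$, we have $\dim(\mathfrak g_{\operatorname{QWOA},\mathcal P}) = \mathcal{O}(\operatorname{poly}(s))$, so $p_c$ is itself polynomially bounded in $s$; equivalently, any QWOA configuration whose depth $p$ is not $\mathcal{O}(\operatorname{poly}(s))$ necessarily satisfies $p > p_c$, i.e.\ is overparameterized.

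For item~\ref{item1:thmNPOPB}, I would then invoke item~\ref{item1:thmQWOAsuperpoly} of Theorem~\ref{thm:QWOA_nonpoly}: if $\mathcal P \notin \mathsf{BPPO}$, then $p^*_{\mathcal P} \notin \mathcal{O}(\operatorname{poly}(s))$. Comparing this with the polynomial upper bound on $p_c$ from the previous paragraph gives $p^*_{\mathcal P} > p_c$ (for the relevant input sizes), and therefore solving $\mathcal P$ with QWOA forces a depth $p \ge p^*_{\mathcal P} > p_c$, meaning the ansatz must be overparameterized. For item~\ref{item2:thmNPOPB}, the identical argument applies verbatim after substituting $p^c_{\mathcal P}$ for $p^*_{\mathcal P}$, $\mathsf{BP}\text{-}\mathsf{APX}$ for $\mathsf{BPPO}$, and item~\ref{item2:thmQWOAsuperpoly} of Theorem~\ref{thm:QWOA_nonpoly} for item~\ref{item1:thmQWOAsuperpoly}.

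The main subtlety — and the place where I would be most careful — is that ``$p^*_{\mathcal P} \notin \mathcal{O}(\operatorname{poly}(s))$'' is formally weaker than ``$p^*_{\mathcal P} \in \omega(\operatorname{poly}(s))$,'' as flagged in the remark following Theorem~\ref{thm:QWOA_nonpoly}. In a pathological case where $p^*_{\mathcal P}$ is polynomial on one infinite subsequence of input sizes and super‑polynomial on another, the clean conclusion is that QWOA is overparameterized precisely on the infinite family of sizes witnessing the failure of the polynomial bound. Under the mild and standard assumption that the optimal (resp.\ $c$‑approximate) QWOA depth is monotonically non‑decreasing in $s$, this upgrades to overparameterization for all sufficiently large $s$, and I would state the theorem under that convention while noting the caveat explicitly, consistent with the preceding discussion.
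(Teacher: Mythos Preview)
Your proposal is correct and follows essentially the same approach as the paper, which simply states that the theorem is obtained ``combining Corollary~\ref{cor:NPO-BP-DLA} and Theorem~\ref{thm:QWOA_nonpoly}'' without spelling out the details. Your write-up makes explicit the intermediate step via Larocca \emph{et al.}'s overparameterization threshold and carefully handles the $\mathcal{O}(\operatorname{poly}(s))$ versus $\omega(\operatorname{poly}(s))$ subtlety already flagged in the paper's remark, so it is in fact a more complete version of the intended argument.
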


\section{Illustrative Examples} \label{sec:examples}

To illustrate the implications of our theoretical results (related to Theorem~\ref{thm:DLA_QWOA} and Corollary~\ref{cor:NPO-BP-DLA}), we now consider concrete instances of optimization problems --- specifically, unstructured search, Max-Cut, and $k$-Densest Subgraph. Note that these problems do not satisfy the premises of Theorems~\ref{thm:QWOA_nonpoly} and~\ref{thm:NPOPB}, since these results are conditional statements whose assumptions rely on classical complexity-theoretic separations --- for example, the existence of an $\mathsf{NPO}$ problem outside $\mathsf{BPPO}$. Since separations of this kind are not known, these theorems cannot be directly instantiated by standard benchmark problems. Indeed, proving the existence of such a problem would constitute a major advance in computational complexity, with implications that could reverberate in fundamental questions of complexity theory such as $\mathsf{P}$ versus $\mathsf{NP}$. The purpose of Theorems~\ref{thm:QWOA_nonpoly} and~\ref{thm:NPOPB} is therefore primarily conceptual, as they clarify the consequences for optimization problems that are not efficiently solvable by a randomized algorithm.

\subsection{Unstructured Search}

Consider a black box for the cost function $C: \mathcal{S} \rightarrow \{0, 1\}$. The inputs $z\in M$, where $M = \{z\in \mathcal{S}: C(z) = 1\}$ are known as the marked elements. In the unstructured search problem, $M$ is assumed to be nonempty, and the goal is to find a marked element~\cite{grover1996fast}. For simplicity, we assume that this problem is unconstrained with $|\mathcal{S}| = |\mathcal{S'}| = 2^n$, where $n$ is the number of qubits. Although this is typically treated as a search-type problem, the unstructured search can also be formulated as an optimization problem --- for example, see Ref.~\cite{jiang2017near}. 

The unstructured search problem is the simplest non-trivial instance that illustrates Theorem~\ref{thm:DLA_QWOA}. In this particular case, $m = 2$ and the two distinct eigenvalues represent marked or unmarked solutions. This yields the following corollary.
\begin{corollary}\label{cor:QWOA_search}
For any instance of the unstructured search problem, the DLA dimension of QWOA satisfies $\operatorname{dim}(\mathfrak g_{\operatorname{QWOA, Search}})\leq 5$. 
\end{corollary}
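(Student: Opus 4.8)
The plan is to simply instantiate Theorem~\ref{thm:DLA_QWOA} in the special case at hand. By the problem setup, the cost function $C$ takes values in $\{0,1\}$, and since $M$ is nonempty the value $1$ is attained; if $M=\mathcal S$ the value $0$ is not attained, but that is a degenerate instance. In the generic (and only interesting) case the image of $C$ restricted to the feasible set $\mathcal S'=\mathcal S$ is exactly $\{0,1\}$, so the number of distinct cost values is $m=2$. Plugging $m=2$ into the bound $\operatorname{dim}(\mathfrak g_{\operatorname{QWOA}})\le m^2+1$ from Theorem~\ref{thm:DLA_QWOA} gives $\operatorname{dim}(\mathfrak g_{\operatorname{QWOA},\operatorname{Search}})\le 2^2+1 = 5$, which is precisely the claimed inequality.

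First I would state explicitly that unstructured search is unconstrained (as assumed in the text, $|\mathcal S|=|\mathcal S'|=2^n$), so that the unconstrained case of Theorem~\ref{thm:DLA_QWOA} applies directly without needing to pass through the indexing/un-indexing argument. Next I would identify the two eigenspaces of $H_C$: the subspace spanned by $\{\ket z : z\in M\}$ (eigenvalue $1$, marked) and the subspace spanned by $\{\ket z : z\notin M\}$ (eigenvalue $0$, unmarked). This is the block partitioning pattern of Theorem~\ref{thm:DLA_QWOA} with $m=2$ blocks, and for completeness I would note that the degenerate case $M=\mathcal S$ only makes $m$ smaller (equal to $1$), so the bound $5$ still holds. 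Then I would invoke Theorem~\ref{thm:DLA_QWOA} with $m=2$ to conclude.

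There is essentially no obstacle here: the corollary is a direct numerical specialization of an already-proved general theorem, so the only thing to be careful about is the edge case where all elements are marked (or, vacuously, none are), which one handles by observing $m\le 2$ in all cases and hence $m^2+1\le 5$. One could optionally remark that the bound is not expected to be tight for this instance, since the DLA of QWOA for unstructured search is closely related to the $2$-dimensional invariant subspace dynamics of Grover's algorithm, but such a refinement is beyond what the corollary asserts.
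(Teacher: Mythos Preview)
Your proof is correct and matches the paper's approach exactly: the paper simply observes that $m=2$ for unstructured search and invokes Theorem~\ref{thm:DLA_QWOA}. As a minor aside, your optional remark that the bound is ``not expected to be tight'' turns out to be off---the paper's subsequent Theorem~\ref{thm:QWOA_search_exact} shows the dimension is exactly $5$ whenever $|M|>1$---but this does not affect the validity of your argument for the corollary itself.
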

Let $R$ be the ratio of marked to total solutions. From the lower bound on the unstructured search problem~~\cite{bennett1997strengths, boyer1998tight, singleOptimalGrover, generalOptimalGrover}, we must have $p^*_{\operatorname{Search}} = \Omega(1/\sqrt{R})$. For typical problems, $p^*_{\operatorname{Search}}$ must scale as a superconstant with respect to the number of qubits and for certain cases may even scale exponentially. This suggests that a highly overparameterized ansatz is required to solve this problem. The exact DLA dimension for the unstructured search problem is obtained in Theorem~\ref{thm:QWOA_search_exact}.

\begin{theorem}\label{thm:QWOA_search_exact}
For the unstructured search problem, the DLA dimension of QWOA is given by   
\begin{equation}
    \operatorname{dim}(\mathfrak{g}_{\operatorname{QWOA, Search}}) = 
    \begin{cases}
        4, & |M| = 1 \\ 
        5, & \text{otherwise}.
    \end{cases}
\end{equation}
\end{theorem}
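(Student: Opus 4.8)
The plan is to determine $\mathfrak{g}_{\mathrm{QWOA,Search}}$ exactly by sandwiching it between the containment already established in the proof of Theorem~\ref{thm:DLA_QWOA} and an explicit spanning set produced by a short chain of commutators. First I would set up notation adapted to the search problem: put $N=2^{n}$, let $k=|M|$ (so implicitly $1\le k\le N-1$, which is exactly the regime in which $m=2$), and order the computational basis so that the marked states come first. In this basis $H_{C}$ is the rank-$k$ diagonal projector $P=\mathrm{diag}(u)$ onto the marked block, where $u,v$ denote the $0/1$ indicator vectors of the marked and unmarked coordinates, and --- exactly as in the proof of Theorem~\ref{thm:DLA_QWOA} --- we may take $H_{M}=J=(u+v)(u+v)^{\top}$, the all-ones matrix. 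Writing $E_{ab}=x_{a}x_{b}^{\top}$ with $x_{1}=u$, $x_{2}=v$, the space $\Lambda=\mathrm{span}\{E_{11},E_{12},E_{21},E_{22}\}$ is $4$-dimensional, $J\in\Lambda$, and the proof of Theorem~\ref{thm:DLA_QWOA} gives $\mathfrak{g}_{\mathrm{QWOA,Search}}\subseteq\mathrm{span}\{iH_{C}\}+i\Lambda$.

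The decisive point is the dichotomy $H_{C}\in\Lambda$ versus $H_{C}\notin\Lambda$. If $k=1$, the marked block is $1\times1$, so $P=E_{11}\in\Lambda$ and therefore $\mathfrak{g}_{\mathrm{QWOA,Search}}\subseteq i\Lambda$, forcing $\dim\le4$. If $k\ge2$, the marked block of $P$ is $I_{k}$, which is not a constant block, so $P\notin\Lambda$; then $\dim\bigl(\mathrm{span}\{iH_{C}\}+i\Lambda\bigr)=5$, and the bound $m^{2}+1=5$ of Theorem~\ref{thm:DLA_QWOA} is attained precisely when $i\Lambda\subseteq\mathfrak{g}_{\mathrm{QWOA,Search}}$.

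For the matching lower bound I would exhibit $i\Lambda\subseteq\mathfrak{g}_{\mathrm{QWOA,Search}}$ for every admissible $k$ using the elementary identities $E_{ab}E_{cd}=(x_{b}^{\top}x_{c})E_{ad}$ together with $PE_{1b}=E_{1b}$, $PE_{2b}=0$, $E_{a1}P=E_{a1}$, $E_{a2}P=0$. A four-step chain suffices: $[iH_{C},iH_{M}]$ gives, up to a nonzero real scalar, $E_{12}-E_{21}$; bracketing this again with $iH_{C}$ gives $E_{12}+E_{21}$, hence $E_{12}$ and $E_{21}$ individually; $[iE_{12},iE_{21}]$ gives, up to a scalar, $(N-k)E_{11}-kE_{22}$, which together with $H_{M}-E_{12}-E_{21}=E_{11}+E_{22}$ and the invertibility of $\left(\begin{smallmatrix}1&1\\ N-k&-k\end{smallmatrix}\right)$ (determinant $-N\neq0$) yields $E_{11}$ and $E_{22}$. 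Hence $i\Lambda\subseteq\mathfrak{g}_{\mathrm{QWOA,Search}}$. Combining with the previous paragraph: when $k=1$, $\mathfrak{g}_{\mathrm{QWOA,Search}}=i\Lambda$ has dimension $4$; when $k\ge2$, $\mathfrak{g}_{\mathrm{QWOA,Search}}=\mathrm{span}\{iP\}\oplus i\Lambda$ has dimension $5$, the sum being direct because $P\notin\Lambda$.

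I expect the main difficulty to be organizational rather than conceptual: one must check that these few commutators genuinely close (this is automatic from $\mathfrak{g}_{\mathrm{QWOA,Search}}\subseteq\mathrm{span}\{iH_{C}\}+i\Lambda$, so no extra directions can appear), that the $2\times2$ system extracting $E_{11},E_{22}$ is nonsingular for all $1\le k\le N-1$, and that the $k=1$ versus $k\ge2$ split --- the point at which $H_{C}$ enters or leaves $\Lambda$ --- is handled consistently. As an independent sanity check I would also record the basis-free picture: $P$ and $J$ are simultaneously block-diagonal with respect to $\mathbb{C}^{N}=W_{s}\oplus W_{a}$ with $W_{s}=\mathrm{span}\{u,v\}$ and $W_{a}$ its complement inside the marked/unmarked blocks; on $W_{a}$ (dimension $N-2$, nonzero iff $k\ge2$) the generators collapse to a single projector, contributing an abelian $1$-dimensional summand, while on $W_{s}$ the two non-commuting, non-orthogonal rank-one projectors $P|_{W_{s}}$ and $J|_{W_{s}}$ generate all of $i\mathfrak{u}(2)$, contributing $4$ dimensions --- giving $5$ when $k\ge2$ and $4$ when $W_{a}=0$, i.e. $k=1$.
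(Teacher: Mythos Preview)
Your proof is correct and takes essentially the same route as the paper's: compute a short chain of nested commutators of $H_C$ and $H_M$ to produce enough independent directions in the block-constant space, then appeal to the upper bound $m^2+1=5$ from Theorem~\ref{thm:DLA_QWOA} and observe that $H_C\in\Lambda$ exactly when $|M|=1$. Your write-up is more systematic --- working through the $E_{ab}=x_a x_b^{\top}$ algebra, the explicit $2\times 2$ system with determinant $-N$, and the $W_s\oplus W_a$ sanity check --- but the underlying argument (including the specific commutators $[H_C,H_M]$ and $[H_C,[H_C,H_M]]$) matches the paper's.
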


\begin{proof}
    Let
    \begin{equation}
        H_1=iH_C = i\begin{pmatrix}
            I & \overline{0}\\ \overline{0} & \overline{0}\\
        \end{pmatrix}, \quad 
        H_2=i H_M = i\begin{pmatrix}
            \overline{1} & \overline{1}\\ \overline{1} & \overline{1}\\ 
        \end{pmatrix}. 
    \end{equation}  
We calculate the nested commutators $H_3 = [H_1, H_2]$, $H_4 = [H_1, H_3]$, and $H_5 = [H_2, H_3]$, obtaining
\begin{align}
        H_3 = -\begin{pmatrix}
            \overline{0} & \overline{1}\\ \overline{-1} & \overline{0}\\
        \end{pmatrix}, \quad 
        H_4 = -i\begin{pmatrix}
            \overline{0} & \overline{1}\\ \overline{1} & \overline{0} \end{pmatrix}, \\\nonumber
            H_5 = -i \begin{pmatrix}
            \overline{2(|M| - |S)} & \overline{2|M| - |S|}\\ \overline{2|M| - |S|} & \overline{2|M|}\\ 
        \end{pmatrix}.
    \end{align}
If $|M| > 1$, one can show that $\{ H_1, H_2, H_3, H_4, H_5 \}$ is a linearly independent set, reaching the tight bound of Theorem~\ref{thm:DLA_QWOA} with $\operatorname{dim}(\mathfrak{g}_{\operatorname{QWOA, Search}}) = 5$. Otherwise, $|M| = 1$ and Theorem~\ref{thm:DLA_QWOA} implies that $\operatorname{dim}(\mathfrak{g}_{\operatorname{QWOA, Search}}) \leq 4$. Here, the tight bound is achieved because $H_1$ is a block-constant matrix that belongs to the space spanned by $\{ H_2, H_3, H_4, H_5 \}$.
\end{proof}

\subsection{Max-Cut}\label{sec:max-cut}
Consider an undirected, unweighted graph $G=(V,E)$ without loops, where $V = \{1,\dots,n\}$ is the set of vertices and $E \subseteq \{(u,v) : \ u,v\in V, u\neq v\}$ is the set of edges. Max-Cut is an unconstrained problem that seeks to partition the vertex set $V$ into two subsets, $T$ and $V\backslash T$, such that the number of edges with one endpoint at $T$ and the other at $V\backslash T$ is maximized~\cite{blekos2024review}. These edges are referred to as the edges crossing the cut or cut edges. Equivalently, the objective is to find a bipartite subgraph of $G$ that contains the largest possible number of edges. This problem is unconstrained and $|\mathcal{S}| = |\mathcal{S'}| = 2^n$. Classically, we encode the solution to the Max-Cut problem with a bitstring $z=z_1\dots z_n$, where each $z_i$ is either $0$ or $1$, indicating the subset to which vertex $i$ belongs. In the corresponding quantum embedding, this bitstring maps to a computational basis state of an $n$-qubit system $\ket{z} = \ket{z_1\dots z_n}$.

Before we apply our theoretical results to the Max-Cut problem, we first show two claims regarding the number of optimal solutions for $2$-regular, chain, and complete graphs for this problem. 
\begin{claim} \label{claim:maxcut1}
    For $2$-regular and chain graphs on the Max-Cut problem, the number of optimal solutions is bounded by $|S_{\text{opt}}| = \mathcal{O}(n)$. 
\end{claim}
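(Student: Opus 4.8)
The plan is to handle the $2$-regular (cycle) and chain (path) graphs separately, since their optimal cut structures differ slightly. For the cycle graph $C_n$, I would first recall that a $2$-regular connected graph is a single cycle on $n$ vertices. If $n$ is even, the cycle is bipartite: the unique proper $2$-coloring (up to swapping the two color classes) is the only maximum cut, achieving all $n$ edges crossing; hence $|S_{\text{opt}}| = 2$, counting the bitstring and its complement. If $n$ is odd, the cycle is not bipartite, so a maximum cut must leave exactly one edge uncut (it cannot leave zero, and one is achievable by $2$-coloring a Hamiltonian path and accepting the single monochromatic closing edge). The uncut edge can be any of the $n$ edges, and for each choice the coloring is forced up to global complementation, giving $|S_{\text{opt}}| = 2n$. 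Either way $|S_{\text{opt}}| = \mathcal{O}(n)$.

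For the chain graph $P_n$ (a path on $n$ vertices, $n-1$ edges), the graph is a tree and therefore bipartite, so the maximum cut uses all $n-1$ edges. The argument is that on a tree there is a unique proper $2$-coloring up to swapping classes: fix the color of one endpoint, and the colors of all remaining vertices are forced by propagating along the path. Hence $|S_{\text{opt}}| = 2$, which is $\mathcal{O}(n)$ (indeed $\mathcal{O}(1)$, but the stated bound only needs the weaker claim).

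The key steps in order are: (i) identify the connected structure ($2$-regular $\Rightarrow$ disjoint cycles, and treat a single cycle; chain $\Rightarrow$ path); (ii) determine the max-cut value by the parity/bipartiteness dichotomy; (iii) count the optimal bitstrings by observing that, once the set of uncut edges is fixed (empty set, or a single edge in the odd-cycle case), removing those edges yields a bipartite graph whose proper $2$-coloring is unique up to complementation on each connected component, and a path or a single-cut cycle has one component; (iv) multiply by the $2$ for global complementation and by the $O(n)$ choices of which edge is uncut. I would also remark briefly on the general $2$-regular case (a disjoint union of cycles): if there are $k$ components the count multiplies, but $k \le n/3$ and each factor is $O(n)$, and more carefully each cycle contributes a constant or linear factor only if odd — a fully general bound would need a short argument that the product stays polynomial, but for the connected cycle that the claim intends, it is immediate.

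The main obstacle is a definitional one rather than a technical one: pinning down exactly what "$2$-regular graph" means in this context (connected single cycle versus general $2$-regular), and making sure the odd-cycle counting is airtight — specifically, verifying that deleting one edge from an odd cycle really does force the coloring up to complementation, and that every maximum cut of an odd cycle arises this way (i.e., that no maximum cut leaves two or more edges uncut, which follows since a Hamiltonian path coloring already achieves $n-1$ crossing edges). None of this requires heavy machinery; the care is entirely in the case analysis and the uniqueness-of-coloring step.
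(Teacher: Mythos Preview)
Your proposal is correct and follows essentially the same approach as the paper: both split into the bipartite cases (even cycle and path) where $|S_{\text{opt}}|=2$ by uniqueness of the proper $2$-coloring up to complementation, and the odd-cycle case where exactly one edge is left uncut, giving $|S_{\text{opt}}|=2n$. Your treatment is slightly more careful (explicitly justifying why the coloring is forced once the uncut edge is chosen, and flagging the disconnected $2$-regular case), but the argument and the resulting counts are identical to the paper's.
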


\begin{proof}
For even $2$-regular graphs and general chains, we have $|S_{\text{opt}}| = 2$, since both are bipartite graphs and the optimal solution is the unique bipartition, up to bitwise complement.. In contrast, for odd $2$-regular graphs, we have $|S_{\text{opt}}| = 2n$. This is due to the observation that the optimal solutions alternate bits between adjacent vertices as much as possible, with exactly one pair of adjacent vertices having equal bits. There are $n$ such configurations, and combining them with their respective bitwise complements yields a total of $2n$ optimal solutions. Therefore, number of optimal solutions satisfy $|S_{\text{opt}}| = \mathcal{O}(n)$ for both $2$-regular and chain graphs.
\end{proof}

\begin{claim} \label{claim:maxcut2}
    For the complete graph on the Max-Cut problem, the number of optimal solutions is of order $|S_{\text{opt}}| = \Theta(2^n / \sqrt{n})$. 
\end{claim}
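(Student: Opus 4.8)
The plan is to exploit the fact that in the complete graph $K_n$ the value of a cut depends only on the sizes of the two sides, not on which vertices lie on each side. First I would observe that for any bipartition $(T, V\setminus T)$ with $|T| = k$, every one of the $k(n-k)$ pairs having one endpoint in $T$ and the other in $V\setminus T$ is an edge of $K_n$; hence the cut contains exactly $k(n-k)$ edges. The integer function $k \mapsto k(n-k)$ on $\{0,1,\dots,n\}$ is strictly increasing for $k < n/2$ and strictly decreasing for $k > n/2$, so its maximizers are precisely $k = \lfloor n/2\rfloor$ and $k = \lceil n/2\rceil$ (which coincide when $n$ is even).

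Next I would translate this into a count of optimal bitstrings. Identifying $z \in \{0,1\}^n$ with the side-assignment of vertices, $z$ encodes an optimal cut if and only if its Hamming weight lies in $\{\lfloor n/2\rfloor, \lceil n/2\rceil\}$. Thus for even $n$ we get $|S_{\text{opt}}| = \binom{n}{n/2}$, while for odd $n$ we get $|S_{\text{opt}}| = \binom{n}{(n-1)/2} + \binom{n}{(n+1)/2} = 2\binom{n}{(n-1)/2}$, using the symmetry $\binom{n}{k} = \binom{n}{n-k}$. (Passing from unordered partitions to bitstrings double-counts complementary solutions, but this is automatically absorbed by summing over all weights in the optimal range.)

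Finally, I would invoke the classical asymptotics of the central binomial coefficient: by Stirling's formula $\binom{n}{\lfloor n/2\rfloor} = \Theta\!\left(2^n/\sqrt{n}\right)$, and the same holds for $\binom{n}{(n\pm1)/2}$ since these differ from the central coefficient by a bounded multiplicative factor. Substituting into the two cases above yields $|S_{\text{opt}}| = \Theta\!\left(2^n/\sqrt{n}\right)$ irrespective of the parity of $n$, as claimed.

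There is no substantial obstacle in this argument; the only mildly technical ingredient is the estimate of the central binomial coefficient, which is standard (for instance via Stirling's approximation, or via the elementary bounds $\tfrac{2^n}{2\sqrt{n}} \le \binom{n}{\lfloor n/2\rfloor} \le \tfrac{2^n}{\sqrt{2n}}$). The only points requiring a little care are treating the parities of $n$ separately and recognizing that the optimal cut value is attained by all and only the bitstrings whose weight is within one of $n/2$.
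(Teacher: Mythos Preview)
Your proposal is correct and follows essentially the same approach as the paper: identify that the cut value in $K_n$ depends only on the partition sizes via $k(n-k)$, locate the maximizers at $\lfloor n/2\rfloor$ and $\lceil n/2\rceil$, count the corresponding bitstrings as central binomial coefficients, and conclude via Stirling's approximation. Your write-up is in fact slightly more detailed than the paper's (e.g., the explicit monotonicity argument and the elementary bounds on the central binomial coefficient), but the underlying argument is identical.
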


\begin{proof}
Note that the Max-Cut cost for complete graphs depends solely on the sizes of the subsets of the partition. In particular, suppose that we have $j$ vertices in one part of the partition. Then, the Max-Cut cost is given by $j(n - j)$. The values of $j$ that maximize the cost function are $j = n/2$ if $n$ is even, and $j = (n-1)/2, (n + 1)/2$ if $n$ is odd. The number of optimal solution in even case is $\binom{n}{n/2}$, while in odd case it is $\binom{n}{(n-1)/2} + \binom{n}{(n+1)/2}$. In both cases, $|S_{\text{opt}}| = \Theta(\binom{n}{n/2})$. To finish, with Stirling's approximation, we can conclude that $|S_{\text{opt}}| = \Theta(2^n / \sqrt{n})$.
\end{proof}

Now, note that Max-Cut is a $\mathsf{NPO}\text{-}\mathsf{PB}$ problem. Therefore, by Corollary~\ref{cor:NPO-BP-DLA}, its DLA dimension is polynomial in the input size. To be more precise, observe that the number of possible edges of $G$ is bounded by $\mathcal{O}(n^2)$. Since Max-Cut is a problem where we count edges, then $m = \mathcal{O}(n^2)$, leading us --- from Theorem~\ref{thm:DLA_QWOA} --- to the following corollary.
\begin{corollary}
For any instance of the Max-Cut problem, the DLA dimension of QWOA satisfies  $\operatorname{dim}(\mathfrak g_{\operatorname{QWOA, Max-Cut}}) = \mathcal{O}(n^4)$. 
\end{corollary}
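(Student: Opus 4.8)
The proof follows almost immediately from Theorem~\ref{thm:DLA_QWOA}, so the plan is essentially to verify the bound $m = \mathcal{O}(n^2)$ for Max-Cut and then substitute. First I would observe that the Max-Cut cost function on a graph $G=(V,E)$ counts the number of cut edges, which for any partition is a non-negative integer not exceeding $|E|$. Hence the image of $C$ is contained in $\{0,1,\dots,|E|\}$, giving $m \leq |E| + 1$. Since $G$ is a simple graph on $n$ vertices, $|E| \leq \binom{n}{2} = \mathcal{O}(n^2)$, and therefore $m = \mathcal{O}(n^2)$.

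With the spectral count in hand, the second step is a direct application of Theorem~\ref{thm:DLA_QWOA}: for any combinatorial optimization instance the QWOA DLA dimension satisfies $\operatorname{dim}(\mathfrak{g}_{\operatorname{QWOA}}) \leq m^2 + 1$. Plugging in $m = \mathcal{O}(n^2)$ yields $\operatorname{dim}(\mathfrak{g}_{\operatorname{QWOA,\,Max\text{-}Cut}}) \leq (\mathcal{O}(n^2))^2 + 1 = \mathcal{O}(n^4)$, which is exactly the claimed bound. The corollary carries no additional content beyond specializing the general theorem, so no further structural or algebraic analysis is needed.

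I do not anticipate a real obstacle here; the only things to be careful about are (i) confirming that $m$ refers to the number of distinct cost values over the feasible solution space (which for unconstrained Max-Cut is all of $\mathcal{S}$), so that the crude bound $m \leq |E|+1$ is legitimate even though not all integer values in that range need be attained, and (ii) making sure the asymptotic composition is stated cleanly, i.e.\ that squaring an $\mathcal{O}(n^2)$ quantity and adding a constant gives $\mathcal{O}(n^4)$. Neither point requires more than a sentence. It is worth remarking that this bound is likely far from tight for specific graph families such as $2$-regular or chain graphs, where $m$ is linear in $n$ rather than quadratic; indeed for those cases the same reasoning gives $\operatorname{dim}(\mathfrak{g}_{\operatorname{QWOA,\,Max\text{-}Cut}}) = \mathcal{O}(n^2)$, matching the known QAOA scaling on such graphs, though establishing exact dimensions would require the more refined commutator computations illustrated in the proof of Theorem~\ref{thm:QWOA_search_exact}.
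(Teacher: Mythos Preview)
Your proposal is correct and matches the paper's own reasoning: the paper simply notes that Max-Cut counts edges, the number of edges is $\mathcal{O}(n^2)$, hence $m = \mathcal{O}(n^2)$, and the corollary follows from Theorem~\ref{thm:DLA_QWOA}. Your write-up is in fact more careful than the paper's one-line justification, and your side remark about $2$-regular and chain graphs anticipates exactly the tightening the paper discusses immediately after the corollary.
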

The bound can be made tighter for Max-Cut on special classes of graphs, for example, on the $2$-regular and chain graphs. In both cases, as the number of edges is of the order of $n$, the number of distinct feasible costs satisfies $m = \mathcal{O}(n)$, and therefore the dimension of the DLA is bounded by $\mathcal{O}(n^2)$. Now, we turn our attention to the optimal QWOA depth to solve Max-Cut on these instances, given by Eq.~\eqref{eqn:optimal_depth}. For Max-Cut, the size of the feasible solution space is $|S'| = 2^n$. In addition, as shown in Claim~\ref{claim:maxcut1}, the number of optimal solutions for these particular instances satisfies $|S_{\text{opt}}| = \mathcal{O}(n)$, implying that $p^*_{\operatorname{Max-Cut}} = \Omega(\sqrt{2^n/n})$. Coupling these bounds on optimal QWOA depth with the bound on the DLA dimension, we observe that the exponential gap between both quantities implies a highly overparameterized QWOA ansatz to successfully solve Max-Cut on these instances. Observe that this is in contrast with the aforementioned QAOA results, where the Max-Cut problem on $2$-regular graphs and chains can be solved with a number of layers of the same order as the DLA dimension~\cite{larocca2023theory}.

In the case of the complete graph, the situation changes drastically. Similarly to previous cases, the DLA dimension is bounded by $\mathcal{O}(n^2)$, which follows from the fact that the Max-Cut cost depends solely on the sizes of the subsets of the partition, leading to at most $\mathcal{O}(n)$ distinct costs. However, in stark contrast to the exponential depth required for $2$-regular and chain graphs, the optimal QWOA depth is polynomially bounded by $p^*_{\operatorname{Max-Cut}} = \Omega(n^{1/4})$. This follows from the fact proved in Claim~\ref{claim:maxcut2}, which shows that the number of optimal solutions is of order $|S_{\text{opt}}| = \Theta(2^n / \sqrt{n})$. Observe that Max-Cut on the complete graph is in $\mathsf{BPPO}$ class. If our bound for the DLA dimension is at least asymptotically close to the exact dimension and the QWOA indeed achieves a quadratic speed-up over the random sampling, as corroborated by the numerical evidence of Refs.~\cite{zhang2024grover, xie2025performance}, QWOA can solve these instances with an underparametrized ansatz.

\subsection{\texorpdfstring{$k$}{}-Densest Subgraph}
Consider a graph $G = (V, E)$ exactly as in the Max-Cut problem and an integer $k$ such that $1 < k < n$. The $k$-Densest Subgraph is a constrained problem that seeks the subgraph $G'$ with exactly $k$ vertices that contains the most edges~\cite{golden2023numerical}. Similarly to Max-Cut, we encode each solution in a computational basis state of an $n$-qubit system, where the corresponding bitstring $z = z_1 \dots z_n$ indicates the selected vertices: $z_i = 1$ if vertex $i$ is included in the subgraph $G'$, and $z_i = 0$ otherwise. On the other hand, unlike Max-Cut, $k$-Densest Subgraph is a constrained problem. In this case, feasible solutions are restricted to bitstrings with Hamming weight exactly $k$. Therefore, $|\mathcal{S}| = 2^n$ and $|\mathcal{S'}| = \binom{n}{k}$. Efficient indexing and un-indexing operators for the set of $k$-combinations are provided by Marsh and Wang~\cite{marsh2020combinatorial}.

Similarly to Section~\ref{sec:max-cut}, we require the following claim on the number of optimal solutions before we proceed with the application of our theoretical results.
\begin{claim} \label{claim:densest}
    For $2$-regular and chain graphs on the $k$-Densest Subgraph problem, the number of optimal solutions is bounded by $|S_{\text{opt}}| = \mathcal{O}(n)$. 
\end{claim}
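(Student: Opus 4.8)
The plan is to show that, for both $2$-regular graphs and chain graphs, the maximum number of edges obtainable by a $k$-vertex induced subgraph is small enough that only $\mathcal{O}(n)$ vertex subsets can achieve it. The key structural observation is that both graph classes are \emph{sparse and locally path-like}: every vertex has degree at most $2$, so any induced subgraph is a disjoint union of paths and (in the $2$-regular case, possibly) the whole cycle. The number of edges in an induced subgraph on $k$ vertices is therefore $k$ minus the number of connected components (with a $+1$ correction if the component is a full cycle), so maximizing the edge count is equivalent to \emph{minimizing the number of connected components} among all $k$-subsets of vertices. An optimal solution thus corresponds to choosing $k$ consecutive vertices along a path or around the cycle, forming a single path component with exactly $k-1$ edges (or $k$ edges if $k=n$ and the graph is an even cycle, but we assume $k<n$).

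First I would make the component-counting reduction precise: if an induced subgraph on $k$ vertices has $c$ connected components, none of which is a full cycle, then it has exactly $k-c$ edges; hence the optimum is at most $k-1$, achieved iff the $k$ chosen vertices induce a connected subgraph, i.e., a contiguous segment. Second, I would count the contiguous $k$-segments in each graph class. For a chain (path) graph on $n$ vertices, there are exactly $n-k+1 = \mathcal{O}(n)$ windows of $k$ consecutive vertices. For a $2$-regular graph (disjoint cycles, or a single cycle of length $n$ in the connected case), there are exactly $n$ arcs of $k$ consecutive vertices around an $n$-cycle, again $\mathcal{O}(n)$; if the $2$-regular graph is a disjoint union of cycles, the optimal strategy is to take a contiguous arc inside the largest cycle, still yielding $\mathcal{O}(n)$ choices. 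Third, I would confirm that no non-contiguous subset can tie the optimum: any such subset has at least two components, hence at most $k-2$ edges, strictly fewer. This pins $|S_{\text{opt}}|$ to exactly the number of contiguous $k$-segments, which is $\mathcal{O}(n)$.

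The main obstacle, and the place requiring the most care, is handling the boundary cases and the possibility that a $2$-regular graph is disconnected: one must verify that spreading the $k$ vertices across several cycles can never beat concentrating them in one, and that when $k$ exceeds the size of the largest cycle the optimum still comes from unions of \emph{whole} cycles plus a contiguous arc — a case analysis that is routine but must be stated cleanly. A minor additional subtlety is the $k=n$ edge case for an odd versus even cycle, which is excluded by the hypothesis $1<k<n$ but worth noting. Once these cases are dispatched, the bound $|S_{\text{opt}}| = \mathcal{O}(n)$ follows immediately, mirroring the structure of the proof of Claim~\ref{claim:maxcut1}.
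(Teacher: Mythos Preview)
Your proposal is correct and follows essentially the same argument as the paper: both reduce the edge count to $k$ minus the number of connected components, conclude that optimal subgraphs are single contiguous segments, and then count these segments as $n$ (cycle) or $n-k+1$ (chain). Your extra care about disconnected $2$-regular graphs is unnecessary in the paper's context, since it explicitly identifies ``$2$-regular graph'' with a single cycle (the ``ring of disagrees''), but it does no harm.
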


\begin{proof}
For both $2$-regular and chain graphs, observe that the cost of a solution is given by $k - x$, where $x$ is the number of connected components of the subgraph $G'$. Therefore, optimal subgraphs have only one connected component, that is, they are composed of consecutive vertices of the cycle ($2$-regular) or the path (chain). In the $2$-regular graph, there are exactly $n$ such optimal subgraphs, while in the chain graph, there are $n + 1 - k$ possibilities. Therefore, it follows that $|S_{\text{opt}}| = \mathcal{O}(n)$ for both $2$-regular and chain graphs.
\end{proof}

As Max-Cut, the $k$-Densest Subgraph is a $\mathsf{NPO}\text{-}\mathsf{PB}$ problem that counts edges, leading to the corollary below.
\begin{corollary}
For any instance of the $k$-Densest Subgraph problem, the DLA dimension of QWOA satisfies $\operatorname{dim}(\mathfrak g_{\operatorname{QWOA, k-Densest-Subgraph}}) = \mathcal{O}(n^4)$. 
\end{corollary}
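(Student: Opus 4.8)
The plan is to invoke Theorem~\ref{thm:DLA_QWOA}, which gives $\operatorname{dim}(\mathfrak g_{\operatorname{QWOA}}) \le m^2 + 1$ where $m$ is the number of distinct cost values attained over the \emph{feasible} solution space, and then simply bound $m$ for the $k$-Densest Subgraph problem. The feasible solutions here are exactly the bitstrings of Hamming weight $k$, each identified with a $k$-vertex induced subgraph $G'$ of $G$, and the cost of such a solution is the number of edges of $G'$. Hence every feasible cost is an integer lying in the range $\{0, 1, \dots, \binom{k}{2}\}$, since a $k$-vertex graph has at most $\binom{k}{2}$ edges.

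First I would record that $m \le \binom{k}{2} + 1$. Next, using the hypothesis $1 < k < n$, I would note $\binom{k}{2} < \binom{n}{2} = \mathcal{O}(n^2)$, so that $m = \mathcal{O}(n^2)$. Finally, substituting into the general bound of Theorem~\ref{thm:DLA_QWOA} yields
\begin{equation}
    \operatorname{dim}(\mathfrak g_{\operatorname{QWOA, k\text{-}Densest\text{-}Subgraph}}) \le m^2 + 1 = \mathcal{O}(n^4),
\end{equation}
which is the claimed statement. I would also remark (paralleling the treatment of Max-Cut in Section~\ref{sec:max-cut}) that for graph classes where the number of edges of any $k$-vertex subgraph is only $\mathcal{O}(n)$ — such as $2$-regular or chain graphs, where by the argument in the proof of Claim~\ref{claim:densest} the cost equals $k$ minus the number of connected components — one gets the tighter bound $m = \mathcal{O}(n)$ and hence $\operatorname{dim}(\mathfrak g_{\operatorname{QWOA}}) = \mathcal{O}(n^2)$.

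There is essentially no hard step here: the entire content is the observation that the cost function counts edges and is therefore polynomially bounded, after which the result is an immediate corollary of Theorem~\ref{thm:DLA_QWOA}. The only point requiring a modicum of care is ensuring that $m$ is computed over the \emph{feasible} subspace $\mathcal S'$ (weight-$k$ strings) rather than all of $\mathcal S$, and that the indexing/un-indexing conjugation does not alter the spectrum — but this is precisely the constrained-case reduction already carried out in the proof of Theorem~\ref{thm:DLA_QWOA}, so it can be cited rather than repeated.
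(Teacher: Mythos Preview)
Your proposal is correct and follows essentially the same route as the paper: observe that the cost function counts edges, hence $m = \mathcal{O}(n^2)$, and then invoke Theorem~\ref{thm:DLA_QWOA}. The only cosmetic difference is that you bound the range of costs by $\binom{k}{2}$ (the maximum number of edges in a $k$-vertex subgraph) whereas the paper bounds it by the total number of edges of $G$; both yield $m = \mathcal{O}(n^2)$ and the conclusion is identical.
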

For the same classes of graphs considered in Max-Cut, the bound is actually tighter. The complete graph is a trivial instance since any subgraph of the complete graph is a clique. Therefore, $m = 1$ and so we can ignore it. For $2$-regular and chain graphs, we can conclude that $m = \mathcal{O}(n^2)$ with analogous arguments to the Max-Cut problem. Now, we consider the optimal QWOA depth. We prove in Claim~\ref{claim:densest} that the number of optimal solutions for both graphs is bounded by $\mathcal{O}(n)$. It remains to consider the size of the feasible solution space. Firstly, observe that a given class of instances belongs to $\mathsf{PO}$ if $k$ is a constant, as it could be efficiently solved by brute force. This follows from the following facts: the number of feasible solutions is the polynomial $\binom{n}{k} = \mathcal{O}(n^k)$; the feasible space can be efficiently indexed; and the problem is in $\mathsf{NPO}$, so each solution can be verified in polynomial time. Then, we now assume that $k$ is not a constant, but growing as a function, i.e., $k = k(n)$. In this setting, the optimal QWOA depth is bounded by $p^*_{k\operatorname{-Densest-Subgraph}} = \Omega\left(\binom{n}{k(n)}/n\right)$. It is reasonable to assume that for the most challenging instances, $\Omega\left(\binom{n}{k(n)}/n\right) \neq \mathcal{O}(\operatorname{poly}(n))$, since otherwise we can conclude analogously that the instances also belong to $\mathsf{PO}$ class. In this case, we would require an overparameterized ansatz to solve the $k$-Densest Subgraph problem.

\section{Conclusion and Outlook}\label{sec:conclusion}
In this work, we have derived a general upper bound on the DLA dimension of QWOA. In particular, we showed that this bound scales polynomially in the number of distinct cost values of the cost function. An implication of our result is that, for optimization problems in the class $\mathsf{NPO}\text{-}\mathsf{PB}$, the DLA dimension is polynomially bounded in the input size. This structural insight allows us to derive meaningful consequences for the expressivity of QWOA. By connecting our bound to recent analytical results on QWOA’s performance limitations, we identified complexity-theoretic conditions under which the QWOA ansatz must be overparameterized to either achieve exact optimality or reach a fixed approximation ratio. In particular, for problems outside the $\mathsf{BPPO}$ and $\mathsf{BP\text{-}APX}$ classes, overparameterization --- often of exponential magnitude --- becomes necessary. This stands in sharp contrast with QAOA, where numerical evidence indicates that optimal solutions can be obtained with high probability as the ansatz becomes only slightly overparameterized~\cite{larocca2023theory}. Our result, therefore, suggests that the DLA dimension cannot always be taken as a reliable indicator of convergence in VQAs.

Several open questions remain. One important path for future work is to derive non-trivial lower bounds on the DLA dimension of QWOA, which would help determine the tightness of our results and offer a more complete characterization of the algorithm’s expressivity. It is also worth exploring how expressivity, as quantified by DLA dimension, relates to generalization behavior in noisy or data-driven contexts, particularly in hybrid quantum-classical settings. Another promising avenue concerns the classical simulability of QWOA. Specifically, the results of Goh \emph{et al.}~\cite{goh2023lie}, combined with our Corollary~\ref{cor:NPO-BP-DLA}, suggest that the expectation value of QWOA could be computed in polynomial time for $\mathsf{NPO}\text{-}\mathsf{PB}$ problems, provided certain prerequisites are met, such as the knowledge of a basis for the DLA. This potential could lead to classical simulation methods competitive with the approach of Golden \emph{et al.}~\cite{GM-Th-QAOA}. Furthermore, our Lie-algebraic framework could be applied more broadly to other variational quantum algorithms, such as QAOA and adaptive ansätze, potentially leading to unified expressivity measures across variational paradigms. We hope that the simplicity and generality of our approach provide a foundation for these future directions and contribute to a deeper theoretical understanding of quantum optimization algorithms.

Another promising direction is how our result can inspire the development of new ansätze. In this sense, QWOA is structurally agnostic by design, which grants generality but limits its performance to Grover-style quadratic speed-up. Motivated by the limitations this structural agnosticism imposes on mixing design, a growing trend in the literature is to investigate how incorporating problem structure into quantum algorithms can help overcome such constraints. Notable works in this direction include: Headley~\cite{headley_thesis} used a statistical approach to analyze QAOA with both the original transverse field mixing and the line mixing, showing how the algorithm can capture dependencies between solutions in both settings; Matwiejew and Wang~\cite{matwiejew2024quantum} introduced a variant of QWOA called Quantum Walk Optimization Algorithms on Combinatorial Subsets (QWOA-CS), where the continuous-time quantum walk incorporates problem structure via an algebraic framework based on graph automorphisms; and more broadly, Matwiejew, Pye, and Wang~\cite{matwiejew2023quantum} consider structural exploitation beyond combinatorial optimization problems, extending the discussion to continuous optimization settings. The results and insights presented here suggest a potential impact on this growing line of research, which warrants further investigation. In particular, exploring how our framework can inform the design of structure-aware mixing may help uncover mechanisms that enable performance surpassing that of Grover’s algorithm. In this context of tailored ansätze for specific problem families, we can explore alternative approaches that incorporate problem structure, such as those incorporating nonlocal operators or time-dependent dynamics.

\begin{acknowledgments}  
The authors would like to thank Andris Ambainis for valuable input on computational complexity classes. 
G.A.B. thanks the financial support of Coordena\c{c}\~{a}o de Aperfei\c{c}oamento de Pessoal de N\'{i}vel Superior (CAPES/Brazil) Grants No. 88887.950125/2024-00 and 88881.128522/2025-01. 
D.L. gratefully acknowledges funding from QuantERA Project QOPT. 
L.P. acknowledges support by the National Research Foundation, Singapore, and A*STAR under its CQT Bridging Grant and its Quantum Engineering Programme under grant NRF2021-QEP2-02-P05. 
\sloppy
R.A.M.S. thanks the financial support from the Latvian Quantum Initiative under the European Union Recovery and Resilience Facility project number 2.3.1.1.i.0/1/22/I/CFLA/001.
F.L.M. thanks the financial support of Conselho Nacional de Desenvolvimento Científico e Tecnol\'{o}gico (CNPq/Brazil) Grants No. 407296/2021-2 and 306049/2025-2.
S.A. thanks the financial support from FY23 SUG-PATRICK REBENTROST (WBS Code: A00098700000).
\end{acknowledgments}

\bibliographystyle{apsrev4-2}
\bibliography{main}


\end{document}